\def\ps@headings{%
\def\@oddhead{\mbox{}\scriptsize\rightmark \hfil \thepage}%
\def\@evenhead{\scriptsize\thepage \hfil \leftmark\mbox{}}%
\def\@oddfoot{}%
\def\@evenfoot{}}
 \def\ps@headings{%
 \def\@oddhead{\mbox{}\scriptsize\rightmark \hfil \thepage}%
 \def\@evenhead{\scriptsize\thepage \hfil \leftmark\mbox{}}%
 \def\@oddfoot{}%
 \def\@evenfoot{}}
 \newtheorem{theorem}{Theorem}[section]
\newtheorem{lemma}[theorem]{Lemma}
\begin{document}
\title{Distributed Load Management in Anycast-based CDNs}

%
%
%
%
%
%

\author{
\IEEEauthorblockN{Abhishek Sinha \IEEEauthorrefmark{1}, Pradeepkumar Mani\IEEEauthorrefmark{2}, 
Jie Liu\IEEEauthorrefmark{2}, Ashley Flavel\IEEEauthorrefmark{2} and David A. Maltz\IEEEauthorrefmark{2}}
\IEEEauthorblockA{\IEEEauthorrefmark{1}MIT, \IEEEauthorrefmark{2}Microsoft\\
Email: sinhaa@mit.edu, 
\{prmani, Jie.Liu, ashleyfl, dmaltz\}@microsoft.com}

}
\maketitle

\begin{abstract}
Anycast is an internet addressing protocol where multiple hosts share the same IP-address. A popular architecture for modern Content Distribution Networks (CDNs) for geo-replicated HTTP-services consists of multiple layers of proxy nodes for service and co-located  DNS-servers for load-balancing on different proxies. Both the proxies and the DNS-servers use anycast addressing, which offers simplicity of design and high availability of service at the cost of partial loss of routing control. Due to the very nature of anycast, load-management decisions by a co-located DNS-server also affects loads at nearby proxies in the network. This makes the problem of distributed load management highly challenging. In this paper, we propose an analytical framework to formulate and solve the load-management problem in this context. We consider two distinct algorithms. In the first half of the paper, we pose the load-management problem as a convex optimization problem. Following a dual decomposition technique, we propose a fully-distributed load-management algorithm by introducing \emph{FastControl} packets. This algorithm utilizes the underlying anycast mechanism itself to enable effective coordination among the nodes, thus obviating the need for any external control channel. In the second half of the paper, we consider an alternative greedy load-management heuristic, currently in production in a major commercial CDN. We study its dynamical characteristics and analytically identify its operational and stability properties. Finally, we critically evaluate both the algorithms and explore their optimality-vs-complexity trade-off using trace-driven simulations.
\end{abstract}

\begin{keywords}
Performance Analysis; Decentralized and Distributed Control; Optimization
\end{keywords}
\section{Introduction}

Content Distribution Networks (CDN) are \emph{de facto} architectures to transparently reduce latency between the end-users and geo-replicated Internet services. Edge-servers with cached contents serve as proxies to intercept some user requests and return contents without a round-trip to the data centers. Routing user-requests to the optimal proxies remains a challenge in managing modern CDN. Routing to a remote proxy may introduce extra round-trip delay, whereas routing to an overloaded proxy may cause the request to be dropped. 

Anycast is a relatively new paradigm for CDN-management and there are  already several commercial CDN in place today using anycast \cite{NSDI_paper}, \cite{engel1998using}, \cite{swildens2009global}. With anycast, multiple proxy-servers share the same IP-address. Anycast relies on routing protocols (such as, BGP) to route service-requests to any one of the geo-replicated proxies, over a \emph{cheap} network-path \cite{zaumen2000load}. Anycast based mechanisms have the advantage of being simple to deploy and maintain.  Being available as a service in IPv6 networks, no global topology or state information are required for its use \cite{basturk1997using}. 

Although anycast routing can simplify the system-design and provide a high level of availability to users \cite{sarat2006use}, it comes at the cost of partial loss of routing-control. This is because, a request may be routed to any one of the different geo-replicated proxies, determined solely by the routing protocol and network state. Since this routing is not under the control of the CDN-operator, the request may end up in an already overloaded proxy, deteriorating the situation further. There have been several attempts in the literature to tackle this lack of load-awareness issue with network-layer anycast. Papers \cite{miura2001server}, \cite{hashim2005active} consider ``Active Anycast" where additional intelligence is incorporated into the routers based on RTT and network congestion. It is specifically targeted to reduce pure latency rather than server overload,  thus yielding sub-optimal performance in a CDN setting. Alzoubi \emph{et al.} \cite{Alzoubi} poses the anycast load-management problem as a General Assignment Problem, which is NP-hard to solve in general. The paper \cite{jaseemuddin2006te} proposes a new CDN architecture which balances server-load and network-latency via detailed traffic engineering. 

In this paper, we focus our attention to the state-of-the-art CDN-architecture such as \emph{FastRoute} \cite{NSDI_paper}, which uses DNS-based-redirection by the co-located DNS-servers for overload control in the local proxies. Since DNS is the first point-of-contact of users to the internet, DNS-redirection is a popular and effective way to mitigate overload \cite{pang2004responsiveness}, \cite{shaikh2001effectiveness}. In this architecture, the proxies are arranged in layers of anycast rings and DNS is responsible for moving load across different layers. In the sequel, each proxy and the co-located DNS unit will be referred to simply as a \emph{node}. See Figures \ref{fig:layers} and \ref{dual_fig} for an overview of the architecture.
 
An \emph{overload} is said to occur when any individual proxy receives more requests than it can process. Since DNS is the primary control knob in this architecture, it is responsible for redirecting traffic to the next layer to alleviate overload. A fundamental problem with this approach is that not all users, that hit a given proxy, can be redirected by the co-located DNS. This is because an user's ISP could be obtaining a DNS-response from a DNS-server different than the  co-located proxy. Hence, intuitively, the ability for a DNS-server to control overload at the corresponding co-located proxy depends on the fraction of oncoming traffic to the proxy that are routed by the co-located DNS-server. Informally, we refer to the above quantity as the \emph{self-correlation} \cite{NSDI_paper} of a given node. A formal definition of correlation and associated quantities will be given in section \ref{sec:model}.

Poor self-correlation could impair a node's ability to control overload in isolation. Hence, successful load management in layered CDN should involve coordinated action by DNS-servers in multiple nodes to alleviate overload. Thus the problem reduces to the DNS-plane determining the appropriate offload or redirection probabilities at each node to move traffic from the overloaded proxies to the next layer. This control-decision could be based on variety of information such as load on each proxy, DNS-HTTP correlation etc. From a practical point of view, not all of these quantities are easily measured and communicated to wherever is needed. Thus a centralized solution is not practically feasible and the challenge is to design a provably optimal, yet completely distributed load management algorithm.

Our key contributions in this paper are as follows:
\begin{itemize}
\item In section \ref{sec:model}, we present a simplified mathematical model for anycast-based load-management in modern CDNs. Our model is general enough to address the essential operational problems faced by the CDN-operators yet tractable enough to draw meaningful analytical conclusions about their operational characteristics.

\item  In section \ref{optimization}, we pose the load management problem as a convex optimization problem and derive a \emph{dual} algorithm to solve it in a distributed fashion. The key to our distributed implementation is the  \emph{Lagrangian decomposition} and the use of \emph{FastControl} packets, which exploits the underlying anycast capability to enforce coordination among the nodes in a distributed fashion. To the best of our knowledge, this is the first instance of such a decomposition technique employed in the context of load-management in CDN.   

\item In section \ref{heuristic}, we consider an existing heuristic currently in operation in \emph{Microsoft Azure}, a major commercial CDN \cite{azure}. We model the dynamics of the heuristic using non-linear system-theory and derive its several important operational characteristics. To provide further insight, a two-node system is analyzed in detail and it is shown, rather surprisingly, that given the ``self-correlations" of the nodes are sufficiently high, this heuristic algorithm is able to control an incoming-load of \emph{any} magnitude, however large. Unfortunately, this theoretical guarantee breaks down once this correlation property does not hold. In this case, the dual algorithm developed in section \ref{optimization} performs arbitrarily better than the heuristic. 

	\item In section \ref{simulations}, we critically evaluate relative performances of both the optimal and heuristic algorithms through extensive simulations. Our simulation is trace-driven in the sense we use real correlation parameters  collected over months from an operational CDN \cite{NSDI_paper}. 
\end {itemize}

\section{System Model} \label{sec:model}

\begin{figure}[]
\center
\begin{overpic}[width=0.5\textwidth]{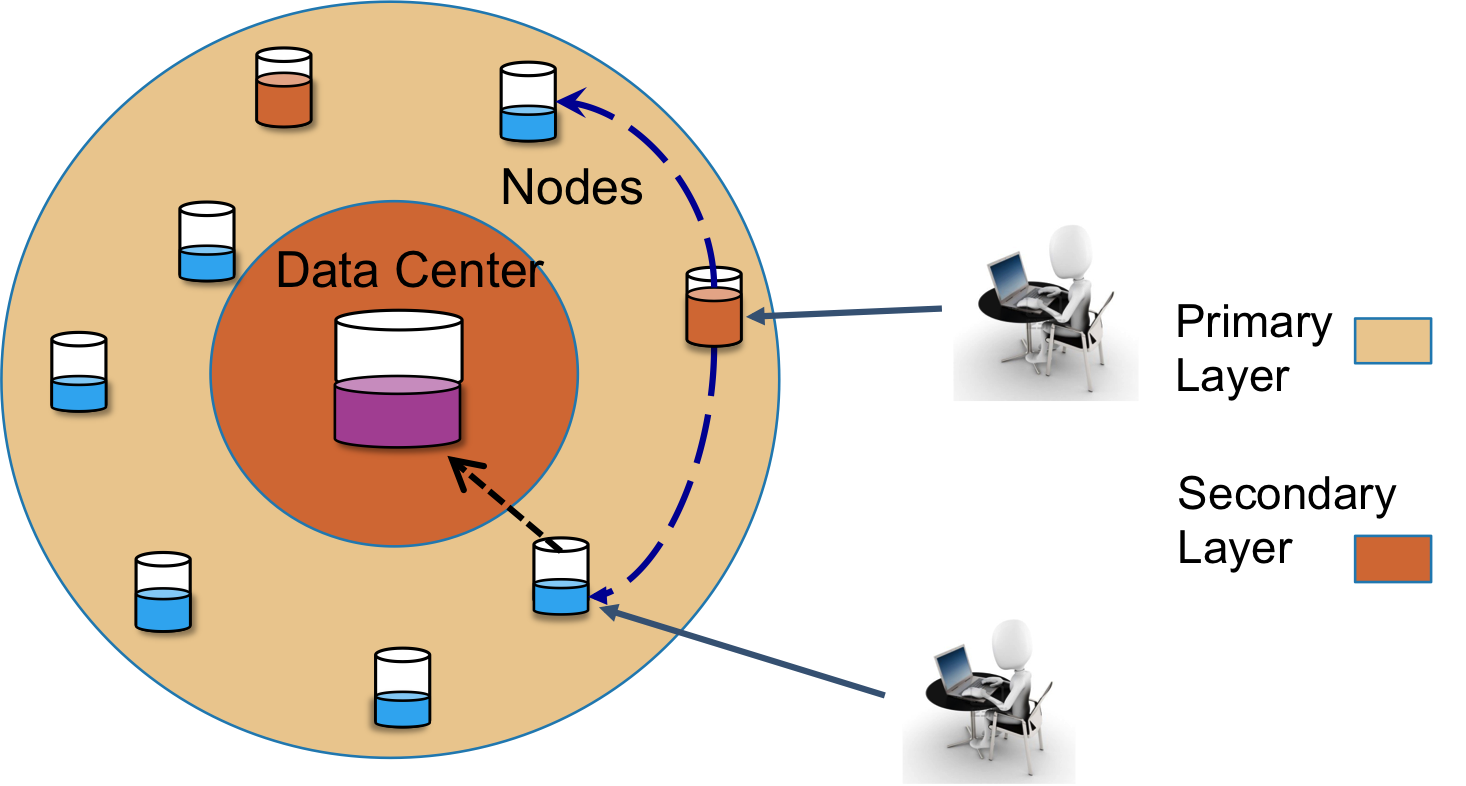}
\put(20,9){{$L_1$ }}
\put(25,18){{$L_2$}}
\put(75,5){\small{Users}}
\end{overpic}
\caption{A toy CDN with two anycast layers.  Solid arrows denote user connections, while dotted arrows denote the effect of diverting traffic by overloaded nodes.} 
\label{fig:layers}
\end{figure}
\indent \emph{Nodes and Layers:} We consider a CDN system consisting of two layers: \emph{primary} (also referred to as $L_1$) and \emph{secondary} (also referred to as $L_2$), as shown in Fig \ref{fig:layers}. The primary layer hosts a total of $N$ nodes. Each node consists of a collocated DNS and a proxy server. See Figure \ref{dual_fig} for a schematic diagram. The proxy servers are the end-points for HTTP-requests. The $i$\textsuperscript{th} proxy has a (finite) processing-capacity of $T_i$ requests per unit time. The secondary layer ($L_2$) consists of a large data-center, with practically infinite processing-capacity. Since the proxies are distributed throughout the world, an average user is typically located near to some proxy, resulting in relatively small round-trip latency. On the other hand, the data-center in $L_2$ is typically located further from the average user-base, resulting in significantly high round-trip latency. Total response time for a user is the sum of round-trip latency and processing time of the server (either proxy or the data-center), which we would like to minimize. \\
\indent \emph{Anycast Addressing :} \emph{All} proxies in the primary layer share the same IP-address $\mathcal{I}_1$ and the data-center in $L_2$ has a distinct IP-address $\mathcal{I}_2$. This method of multiple proxies sharing the same IP-addresses is known as \emph{anycast} addressing and is widely used for geo-replicated cloud services \cite{cloudflareanycast}, \cite{Alzoubi}.\\
\indent \emph{Control decisions :} When a DNS-query arrives at a node $i$, asking for an IP-address of an HTTP-server, the collocated DNS-server decides whether to return the address $\mathcal{I}_2$ (meaning, to offload the incoming request to the data-center) or the anycast address $\mathcal{I}_1$ (meaning, not to offload the incoming request to the far-away data-center and serve it in some proxy in the primary layer itself). This (potentially randomized) binary decision could be based on the following observables available to the node $i$ : 
\begin{enumerate}
 \item \emph{DNS-query} arrival rate at node $i$'s co-located DNS-server, given by $A_i$ requests per unit time\footnote{We assume that $A_i$'s are piecewise constant and do not change during the transient period of the load-management algorithms discussed here.}. Each DNS-query accounts for a certain amount of user HTTP-load which is influenced by the DNS-response at node $i$; We normalize $A_i$ so that each unit of $A_i$ corresponds to a unit of HTTP-load; As an example, if 10\% of DNS responses at node $i$ returns the address $\mathcal{I}_2$, then the total load shifted to $L_2$ due to node $i$'s DNS is $0.1A_i$.
 \item \emph{Content-request} arrival rate (or simply, the \emph{load}) at the $i$\textsuperscript{th} collocated proxy, given by $S_i(t)$ requests per unit time. This is clearly a result of $A_i$'s and different nodes' offload-decisions (Eqn.\eqref{S_eqn}). Since we focus on web-applications, we will use the phrase content-requests and HTTP-requests interchangeably. The observable $S_i(t)$ is available locally at node $i$. We assume that the workloads introduced by different requests are roughly the same.

\end{enumerate}

\indent \emph{Anycasting and inter-node coupling:} When a DNS-query arrives at a node $i$ and the corresponding DNS-server returns the $L_1$ anycast-address $\mathcal{I}_1$, the request may be routed to any of one of the $N$ proxy-nodes in the primary layer for service, depending on the corresponding ISP's routing policy, network congestion and other random factors. We assume that a typical DNS-query, routed to $L_1$ by the DNS at node $i$, arrives for service at node $j$'s proxy with probability $C_{ij}> 0$, where
\begin{eqnarray} \label{corr_mat_prop}
 \sum_{j=1}^{N}C_{ij}=1, \hspace{10pt}\forall i=1,2,\ldots, N
\end{eqnarray}
 The matrix $\bm{C}\equiv [C_{ij}]$ can be determined empirically by setting up an experiment similar to the one described in \cite{shaikh2001effectiveness}. In this paper, we primarily focus on effects that arise from non-trivial couplings among the nodes. 

\indent \emph{Input-Output Equations:} Assume that, due to action of some control-strategy $\pi$, the collocated DNS at node $i$ decides to divert $1-x_i^\pi(t)$ fraction of incoming DNS-queries (given by $A_i$) to Layer $L_2$ at time $t$ ($0\leq x_i^\pi(t) \leq 1$). Thus it routes $x_i^\pi(t)$ fraction of the incoming requests to different proxies in the layer $L_1$. Hence the total HTTP-request arrival rate, $S_i(t)$, at the $i$\textsuperscript{th} proxy may be written as 
\begin{eqnarray} \label{S_eqn}
 S_i(t)=\sum_{j=1}^{N} C_{ji}x_j^\pi(t)A_j, \hspace{10pt}\forall i=1,2,\ldots, N
\end{eqnarray}
A local control strategy $\pi$ is identified by a collection of mappings $\bm{\pi}=\bigg(x^\pi_i(\cdot),i=1,2,\ldots,N\bigg)$, given by $\bm{x}^\pi_i: \Omega^t_i \times t \to [0,1]$, where $\Omega^t_i$ is the set of all observables at node $i$ up to time $t$.
\section{An Optimization Framework} \label{optimization}
\subsection{Motivation} 
In our context, the central objective of a load-management policy $\pi$ is to route as few requests as possible to the secondary layer (due to its high round-trip latency), without overloading the primary-layer proxies (due to their limited capacities). Clearly, these two objectives are at conflict with each other and we need to find a suitable compromise. The added difficulty, which makes the problem fundamentally challenging is that, the nodes are autonomous agents and take their redirection decisions on their own, based on their local observables only. As an example, a simple locally-greedy heuristic for node $i$ could be to redirect requests to $L_2$ (i.e. decrease $x_i(t)$) whenever its co-located proxy is overloaded (i.e., $S_i(t)>T_i$) and redirect requests to $L_1$ (i.e., increase $x_i(t)$) whenever the proxy is under-loaded (i.e., $S_i(t)<T_i$). This forms the basis of the strategy adopted in \cite{NSDI_paper}. \\
 This greedy strategy appears to be quite appealing for deployment, due to its extreme simplicity. However, in the next subsection \ref{u_o}, we show by a simple example that, in the presence of significantly high cross-correlations among the nodes, this simple heuristic could lead to an  \emph{uncontrollable overload situation}, an extremely inefficient operating point with degraded service quality. This example will serve as a motivation to come up with a more efficient distributed load-management algorithm, that we develop subsequently. \\

\subsection{Locally Uncontrollable Overload: An Example} \label{u_o}
 Consider a CDN, hosting only two nodes $a$ and $b$ in the primary layer, as shown in Figure \ref{two-node_fig}. The (normalized) DNS-query arrival rates to the nodes $a$ and $b$ are  $A_a=A_b=1$. Suppose the processing capacities (also referred to as \emph{thresholds}) of the corresponding proxies are $T_a=T_b=0.7$. With the correlation values shown in figure \ref{two-node_fig}, the HTTP-request arrival rates (load) to the proxies at $a$ and $b$ are given as
 \begin{eqnarray}
  S_a(t)=0.1x_a(t)+0.5x_b(t)\\
  S_b(t)=0.9x_a(t)+0.5x_b(t)
 \end{eqnarray}
Since $0\leq x_a(t), x_b(t)\leq 1$, it is clear that 
\begin{eqnarray*}
	S_a(t) \leq 0.1\times 1 + 0.5 \times 1=0.6<0.7=T_a, \hspace{5pt} \forall t
\end{eqnarray*}
 Thus, the proxy at node $a$ will be under-loaded irrespective of the load-management policy $\pi$ in use. Consequently, under the greedy-heuristic (formally, algorithm \ref{algo_greedy} in Section \ref{heuristic}), the collocated DNS-server at node $a$ will \emph{greedily} increase its $L_1$ redirection probability $x_a(t)$ such that $x_a(t) \nearrow 1$ in the steady-state (note that, node $a$ acts independently as it does not have node $b$'s loading information). This, in turn, overloads the proxy in node $b$ because the steady-state HTTP-load at proxy $b$ becomes 
\begin{eqnarray*}
	S_b(\infty) &=& 0.9 x_a(\infty)+0.5x_b(\infty)\\
	&=& 0.9\times 1 + 0.5x_b(\infty)\\
	&\geq& 0.9 >0.7=T_b.
\end{eqnarray*}
\begin{figure}
\small
 \centering 
  \begin{overpic}[scale=0.7]{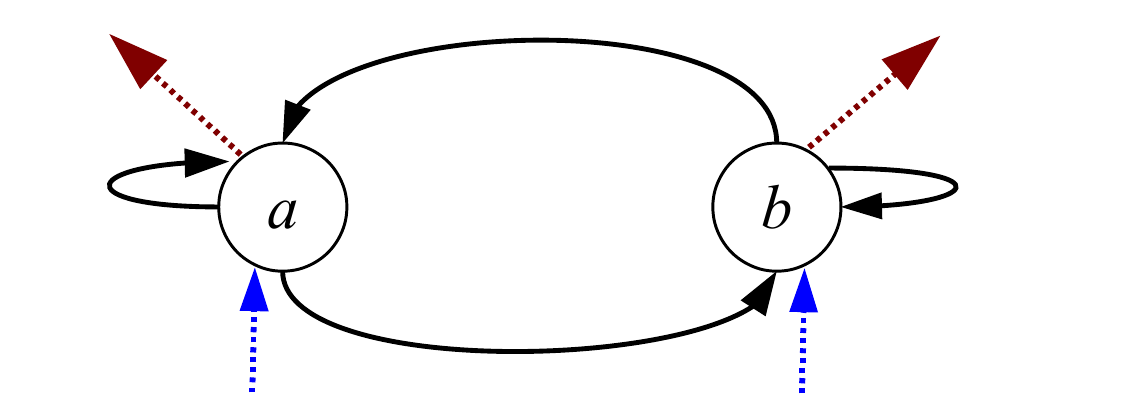}
  \put(10,13){$C_{aa}$}
  \put(8,9){$=0.1$}
  \put(38,-1){$C_{ab}=0.9$}
  \put(38,35){$C_{ba}=0.5$}
  \put(74,12){$C_{bb}=0.5$}
  \put(18,-2){$A_a=1$}
  \put(67,-2){$A_b=1$}
  \put(30,21){$T_a=0.7$}
  \put(51,21){$T_b=0.7$}
  \put(45,10){$L_1$}
  \put(4,43){Traffic to}
  \put(9,39){$L_2$}
  \put(5,35){$A_a(1-x_a(t))$}
  \put(71,43){Traffic to}
  \put(76,39){$L_2$}
  \put(72,35){$A_b(1-x_b(t))$}
   \end{overpic}
\caption{A two-node system illustrating locally uncontrollable overload at the node $b$}
\label{two-node_fig}
\end{figure}
 Since node $b$ is overloaded in the steady-state, under the action of the above greedy heuristic, it will (unsuccessfully) try to avoid the overload by offloading the incoming DNS-queries to $L_2$, as much as it can, by letting $x_b(t)\searrow 0$. Thus the steady-state operating point of the algorithm will be $x_a(\infty)=1,x_b(\infty)=0$, with node $b$ overloaded. It is interesting to note that poor self-correlation of node $a$ ($C_{aa}=0.1$) causes the other node $b$ to overload, even under the symmetric DNS-query arrival patterns. Also, this conclusion does not depend on the detailed control-dynamics of the offload probabilities (viz. the instantaneous values if $\dot{x}_1(t)$ and $\dot{x}_2(t)$). Since the overload condition at the node-$b$ can not be overcome by isolated action of node-$b$ itself, we say that node-$b$ is undergoing a \emph{locally uncontrollable overload situation}. \\\\
 From the point-of-view of the overall system, this is an extremely inefficient operating condition, because a large fraction of the incoming requests either gets dropped or delayed due to the overloaded node-$b$. This poor operating point could have been avoided provided the nodes somehow mutually co-ordinate their actions\footnote{Another trivial solution to avoid overload could be to offload all traffic from all nodes to $L_2$, i.e. $x_i(t)=0, \forall i, \forall t$. However, this is highly inefficient because it is tantamount to not using the primary servers at all.}. 
It is not difficult to realize that the principal reasons behind the locally uncontrollable overload situation in the above example are as follows:
 \begin{itemize}
  \item (1) distributed control with local information 
  \item (2) poor self-correlation of node $a$ ($C_{aa}=0.1$)
 \end{itemize}
The factor (1) is fundamentally related to the distributed nature of the system and requires coordinations among the nodes. In our distributed algorithm [\ref{algo_dual}] we address this issue by introducing the novel idea of \emph{FastControl} packets. This strategy does not require any explicit state or control-information exchange.\\
Regarding the factor (2), we intuitively expect that the local greedy-heuristic should work well if the self-correlation of the nodes (i.e. $C_{ii}$) are not too small. In this favorable case, the system will be loosely coupled, so that each node accounts for a major portion of the oncoming load to itself. In section \ref{heuristic}, we will return to a variant of this local heuristic used in FastRoute \cite{NSDI_paper} and derive analytical conditions under which the above intuition holds good.\\
In this section, we take a principled approach and propose an iterative load-management algorithm, which is provably optimal for arbitrary system-parameters ($\bm{A},\bm{C}$). In this algorithm, it is enough for each node $i$ to know its own local DNS and HTTP-request arrival rates (i.e., $A_i$ and $S_i(t)$ respectively) and the entries corresponding to the (static) $i$\textsuperscript{th} row and column of the correlation matrix $\bm{C}$ (i.e., $C_{i\cdot}, C_{\cdot i}$). No non-local knowledge of the dynamic loading conditions of other nodes $j\neq i$ is required for its operation. 

\subsection{Mathematical formulation}
Consider the following optimization problem. The decision variables $(\bm{x}, \bm{S})$ have the same interpretation as above. The cost-functions, constraints and their connection to the load-management problem are discussed in detail subsequently.
\begin{eqnarray}\label{obj_fun}
 \text{Minimize} \hspace{20pt} W(\bm{x},\bm{S})\equiv \sum_{i=1}^{N}\big(g_i(S_i) + h_i(x_i)\big) 
\end{eqnarray}
Subject to,
\begin{eqnarray} \label{load_x}
S_i=\sum_{j=1}^{N}C_{ji}A_jx_j,\hspace{10pt} \forall i=1,2,\ldots, N
\end{eqnarray}
\begin{eqnarray*}
\bm{x}\in X, 
\bm{S} \in \Sigma_T
\end{eqnarray*}
 \paragraph*{Discussion}The \underline{first component} of the cost function, $g_i(S_i)$, denotes the cost for overloading the $i$\textsuperscript{th} proxy. In our numerical work, we take $g_i(\cdot)$ to be proportional to the average aggregate queuing delay for an $M/G/1$ queue with processing-capacity $T_i$ \cite{bertsekas1987data}, i.e., 
\begin{eqnarray} \label{ex1}
g_i(S_i)&=& \begin{cases}
\frac{\eta_i S_i}{1-\frac{S_i}{T_i}}, \hspace*{20pt}\text{if   } S_i\leq T_i \\
 \infty \hspace*{10pt} \text{o.w.} 
\end{cases}
\end{eqnarray}
Here $\eta_i$ is a positive constant, denoting the relative cost 
(say, in dollars) due to per unit delay.\\
The \underline{second component} of the cost function $h_i(x_i)$ denotes the cost due to \emph{round-trip-latency} of requests routed to the secondary layer ($L_2$). As an example, in a popular model \cite{roughgarden2005selfish} the delay incurred by a single packet over a congested path varies \emph{affinely} with the offered load. Since the rate of traffic sent to the secondary layer by node $i$ is $A_i(1-x_i)$, in this model, the cost-function $h_i(x_i)$ may be written as follows  
\begin{eqnarray} \label{ex2}
h_i(x_i)=\gamma_i A_i(1-x_i)\big(d_i+A_i(1-x_i)\big)
\end{eqnarray}
where $d_i$ is a (suitably normalized) round-trip-latency parameter from the node $i$ to $L_2$ and $\gamma_i$ is a positive constant denoting the relative cost due to per unit latency. We use the cost-functions described above for our numerical work.\\ 
The \underline{constraint set} $X=[0,1]^N$ represents the $N$-dimensional unit hypercube and the set $\Sigma_T$ captures the capacity constraints of the proxies, e.g., if the proxy $i$ has capacity $T_i$ then we have  
\begin{eqnarray*}
 \Sigma_{\bm{T}}=\{\bm{S}: S_i \leq T_i, \forall i=1,2,\ldots, N\}
\end{eqnarray*}
In general, the functions $g_i(\cdot), h_i(\cdot)$ are required to be closed, proper and convex \cite{bertsekas1999nonlinear}. We also assume the functions $g_i(\cdot)$s to be monotonically increasing. Hence, we can replace the equality constraint \eqref{load_x} by the following inequality constraint, without loss of optimality
\begin{eqnarray*}
\sum_{j=1}^{N}C_{ji}A_jx_j \leq S_i,\hspace{10pt} \forall i=1,2,\ldots, N
\end{eqnarray*} 
This is because, if the optimal $S_i^*$ is strictly greater than the LHS, we can strictly (and feasibly) reduce the objective value by reducing $S_i^*$ to the level of LHS, resulting in contradiction.\\
Hence, the above load management problem is equivalent to the following optimization problem $\textbf{P}_1$ :\\
\textbf{Minimize} 
\hspace{10pt} $W(\bm{x},\bm{S})= \sum_{i=1}^{N}\big(g_i(S_i) + h_i(x_i)\big) $\\
\textbf{Subject to},
\begin{eqnarray}
&&\sum_{j=1}^{N}C_{ji}A_jx_j \leq S_i,\hspace{10pt} \forall i=1,2,\ldots, N \label{constr1}\\
&&\bm{x}\in X, \bm{S} \in \Sigma_{\bm{T}}
\end{eqnarray} 
Where, $X=[0,1]^N$ and $\Sigma_{\bm{T}}=\{\bm{S}: S_i \leq T_i, \forall i=1,2,\ldots, N\}$.

Since the objective functions as well as the constraint sets of the problem $\bm{P}_1$ are all convex \cite{bertsekas1999nonlinear}, we immediately have the following lemma:
\begin{lemma}
 The problem $\textbf{P}_1$ is convex. 
\end{lemma}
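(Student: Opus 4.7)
The plan is to verify the two standard ingredients for convexity of an optimization problem: convexity of the objective, and convexity of the feasible set. Since the problem is explicitly set up in standard form, each ingredient should follow directly from the assumptions already in place and from elementary operations that preserve convexity.

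First I would address the objective $W(\bm{x},\bm{S})=\sum_{i=1}^N \big(g_i(S_i)+h_i(x_i)\big)$. By the assumption stated just before the lemma, each $g_i(\cdot)$ and each $h_i(\cdot)$ is closed, proper and convex on its respective one-dimensional domain. Viewed as functions on $\mathbb{R}^N\times\mathbb{R}^N$, each term $g_i(S_i)$ and $h_i(x_i)$ depends on only one coordinate and is therefore convex in $(\bm{x},\bm{S})$ (convexity is preserved by composition with a coordinate projection, which is affine). A finite non-negative sum of convex functions is convex, so $W$ is convex in $(\bm{x},\bm{S})$. For the particular choices in \eqref{ex1} and \eqref{ex2}, convexity can be verified directly: $g_i$ in \eqref{ex1} is the composition of the convex, nondecreasing scalar map $u\mapsto \eta_i T_i u/(1-u)$ (on $[0,1)$, extended by $+\infty$) with the affine $S_i\mapsto S_i/T_i$, and $h_i$ in \eqref{ex2} is a quadratic with non-negative leading coefficient $\gamma_i A_i^2\geq 0$.

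Next I would handle the feasible set. Each constraint in \eqref{constr1} has the form $\sum_j C_{ji}A_j x_j - S_i \leq 0$, which is an affine (hence convex) inequality in $(\bm{x},\bm{S})$ and so defines a closed half-space. The box $X=[0,1]^N$ is convex as a Cartesian product of intervals, and similarly $\Sigma_{\bm{T}}=\{\bm{S}:S_i\leq T_i,\ \forall i\}$ is convex as an intersection of half-spaces. The overall feasible set is the intersection of these $N$ half-spaces with $X\times\Sigma_{\bm{T}}$, and intersections of convex sets are convex.

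Combining these two observations yields that $\textbf{P}_1$ is the minimization of a convex function over a convex set, i.e., a convex optimization problem. The main ``obstacle'' here is really just bookkeeping: making explicit that the one-dimensional convexity of $g_i, h_i$ lifts to joint convexity in $(\bm{x},\bm{S})$, and that the coupling constraint \eqref{constr1} is affine rather than merely linear in the individual $x_j$. No deep argument is required beyond the standard preservation rules (affine composition, finite sum, intersection).
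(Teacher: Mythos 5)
Your proof is correct and follows essentially the same route as the paper, which simply observes that the objective and the constraint sets of $\textbf{P}_1$ are all convex and cites a standard reference; you have merely filled in the routine details (coordinate-separable sum of convex functions, affine inequality constraints, intersection of convex sets). No further comment is needed.
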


\subsection{The Dual Decomposition Algorithm } 
\label{dual_section}
In this section we derive a dual algorithm \cite{kelly1997charging}, \cite{lobel2011distributed}, \cite{eryilmaz2010distributed} for the problem $\textbf{P}_1$ and show how it leads to a distributed implementation with negligible control overhead.   \\
By associating non-negative dual variable $\mu_i$ to the $i$\textsuperscript{th} constraint in \eqref{constr1} for all $i$, the Lagrangian of $\textbf{P}_1$ reads as follows:

\begin{eqnarray}\label{dual_obj}
\mathcal{L}(\bm{x}, \bm{S}, \bm{\mu})= \sum_{i=1}^{N}\big( g_i(S_i)-\mu_iS_i) \big)+ \nonumber \\
\sum_{i=1}^{N}\bigg(h_i(x_i)+A_ix_i\big(\sum_{j=1}^{N}\mu_jC_{ij}\big)\bigg)
\end{eqnarray}
This leads to the following dual objective function \cite{bertsekas1999nonlinear}
\begin{eqnarray} \label{dual_opt}
D(\bm{\mu})=\inf_{\bm{x}\in X,\bm{S}\in \Sigma_{\bm{T}}} \mathcal{L}(\bm{x}, \bm{S}, \bm{\mu})
\end{eqnarray}
We now exploit the \emph{separability} property of the dual objective \eqref{dual_obj} to reduce the problem \eqref{dual_opt} into following two one-dimensional sub-problems: 
\begin{equation} \label{dual_sep}
\left.
\begin{aligned} 
S_i^*(\bm{\mu})&=\inf_{0\leq S_i\leq T_i} \bigg(g_i(S_i)-\mu_iS_i  \bigg) \quad \\
x_i^*(\bm{\mu})&=\inf_{0\leq x_i\leq 1}\bigg(h_i(x_i)+A_i\beta_i(\bm{\mu})x_i \bigg)
\end{aligned}
\right\}
\end{equation}
The scalar $\beta_i(\bm{\mu})$ is defined as 
\begin{eqnarray} \label{beta_eqn}
\beta_i(\bm{\mu})=\sum_{j=1}^{N}\mu_jC_{ij} =\bm{C}_i^{T}\bm{\mu},
\end{eqnarray}
where $\bm{C}_i$ is the $i$\textsuperscript{th} row of the correlation-matrix $\bm{C}$.\\
The factor $\beta_i(\bm{\mu})$ couples the offload decision of node $i$ with the entire network. Once the value of $\beta_i(\bm{\mu})$ is available to the node $i$, it has all the required information to \emph{locally} solve the corresponding sub-problems \eqref{dual_sep} and hence evaluate the dual objective $D(\bm{\mu})$ for a fixed $\bm{\mu}\geq \bm{0}$. These solutions may even be obtained in closed form in some cases. In sub-section \ref{dist_imp}, we will show how this factor $\beta_i(\bm{\mu})$ may be made available to each node $i$ on-the-fly. \\
 With the stated assumptions on the cost functions, there will be no duality-gap \cite{bertsekas1999nonlinear}. Convex duality theory  guarantees the existence of an optimal dual variable $\bm{\mu}^*\geq \bm{0}$ such that solution to the relaxed problem \eqref{dual_sep} corresponding to $\bm{\mu}^*$ gives an optimal solution to the original constrained optimization problem $\textbf{P}_1$. To obtain the optimal dual variable $\bm{\mu}^*$, we solve dual of the problem $\textbf{P}_1$, given as follows 
\begin{eqnarray}
\textbf{Maximize} \hspace{10pt} D(\bm{\mu}) \label{dual_prob}\\
\textbf{subject to,} \hspace{10pt}\bm{\mu} \geq \bm{0} \nonumber
\end{eqnarray}

The dual problem given in Eqn. \eqref{dual_prob} is well-known to be convex \cite{bertsekas1999nonlinear}. To solve the dual problem, we use the dual super-gradient algorithm \cite{nedic2008convex}, which will be shown to be amenable to a distributed implementation.\\
At the $k$\textsuperscript{th} step of the iteration, a super-gradient $\bm{g}(\bm{\mu}(k))$ of the dual function $D(\bm{\mu})$ at the point $\bm{\mu}=\bm{\mu}(k)$ is  given by $\partial D(\bm{\mu}(k))=\bm{S}^{\text{obs}}(k)-\bm{S}(k)$ \cite{bertsekas1999nonlinear}, where $S_i^{\text{obs}}(k)$ is the observed rate of arrival of incoming traffic at proxy $i$, i.e., 
 \begin{eqnarray}
 S_i^{\text{obs}}(k)\equiv \sum_{j=1}^{N}C_{ji}A_jx_j^*(k),
 \end{eqnarray}
 and $x_i^*(k)$ and $S_i^*(k)$ are the primal variables obtained from Eqn. \eqref{dual_sep}, evaluated at the current dual variable $\bm{\mu}=\bm{\mu}(k)$. Following a super-gradient step, the dual variables $\bm{\mu}(k)$ are iteratively updated component-wise at each node $i$ as follows: 
\begin{eqnarray} \label{sub_grad_step_1}
\mu_i(k+1) = \bigg(\mu_i(k)+ \alpha \big(S_i^{\text{obs}}(k)-S_i^*(k)\big)\bigg)^+
\end{eqnarray}
Here $\alpha$ is a small positive step-size constant, whose appropriate value will be given in Theorem \eqref{bertsekas_opt}. Since the problem-parameters slowly vary over time, a stationary algorithm is practically preferable. Hence, we used a constant step-size $\alpha$, rather than a sequence of diminishing step-sizes $\{\alpha_k\}$.  The above constitutes theoretical underpinning of steps (5) and (6) of the distributed algorithm given in Algorithm \ref{algo_dual}. \\
\subsection{Convergence of the Dual Algorithm}
To prove the convergence of the above algorithm, we first \emph{uniformly} bound the $\ell_2$ norm of the super-gradients $\bm{g}(\bm{\mu}(k))$:
\begin{eqnarray*}
\bm{g}(\bm{\mu}(k))\equiv \bm{S}^{\text{obs}}(k)-\bm{S}(k)=\bigg(\sum_{j=1}^{N}C_{ji}A_jx_j^*(k)-S_i^*(k)\bigg)_{i=1}^{N}
\end{eqnarray*} 
We start with the following lemma. 
\begin{lemma} \label{subgrad_bounded}
If the total external DNS-query arrival rate to the system is bounded by $A_{\max}$ (i.e. $\sum_i A_i \leq A_{\max}$) and the maximum processing-capacity of individual proxies is bounded by $T_{\max}$ (i.e. $T_i \leq T_{\max}, \forall i$) then, for all $k\geq 1$
\begin{eqnarray}
||\bm{g}(\bm{\mu}(k))||_2^2\leq A_{\max}^2 + NT_{\max}^2
\end{eqnarray}
\end{lemma}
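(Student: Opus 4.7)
The plan is to bound the two natural pieces of each super-gradient component separately and exploit both the non-negativity of the primal iterates and the row-stochasticity of $\bm{C}$ from Eqn.~\eqref{corr_mat_prop}.

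First I would observe that for any non-negative reals $a,b\geq 0$ one has $(a-b)^2 = a^2 - 2ab + b^2 \leq a^2 + b^2$. Since both $S_i^{\text{obs}}(k) \geq 0$ (as a non-negative combination of non-negative quantities) and $S_i^*(k)\in [0,T_i]\subseteq [0,T_{\max}]$, applying this inequality component-wise gives
\begin{equation*}
\|\bm{g}(\bm{\mu}(k))\|_2^2 \;=\; \sum_{i=1}^N \bigl(S_i^{\text{obs}}(k) - S_i^*(k)\bigr)^2 \;\leq\; \sum_{i=1}^N \bigl(S_i^{\text{obs}}(k)\bigr)^2 \;+\; \sum_{i=1}^N \bigl(S_i^*(k)\bigr)^2.
\end{equation*}

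Next, the capacity bound $S_i^*(k) \leq T_i \leq T_{\max}$ immediately yields $\sum_{i=1}^N (S_i^*(k))^2 \leq N T_{\max}^2$. For the observed-load term, I would sum over $i$, swap the order of summation, and use the row-stochasticity $\sum_{i=1}^N C_{ji}=1$ from \eqref{corr_mat_prop} together with $x_j^*(k)\in[0,1]$ to conclude
\begin{equation*}
\sum_{i=1}^N S_i^{\text{obs}}(k) \;=\; \sum_{j=1}^N A_j x_j^*(k) \sum_{i=1}^N C_{ji} \;=\; \sum_{j=1}^N A_j x_j^*(k) \;\leq\; \sum_{j=1}^N A_j \;\leq\; A_{\max}.
\end{equation*}
Finally, since the $S_i^{\text{obs}}(k)$'s are non-negative, the elementary fact $\sum_i a_i^2 \leq (\sum_i a_i)^2$ for $a_i\geq 0$ gives $\sum_{i=1}^N (S_i^{\text{obs}}(k))^2 \leq A_{\max}^2$. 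Combining the two bounds yields the claim.

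There is really no hard step here; the only place one has to be a little careful is in the direction of the stochasticity used: the equation \eqref{S_eqn} transposes $\bm{C}$, so when aggregating $\sum_i S_i^{\text{obs}}$ one is in fact summing along the $j$-th row of $\bm{C}$ (not the column), which is exactly the sum normalized by \eqref{corr_mat_prop}. Once that bookkeeping is right, the rest is a two-line squeeze.
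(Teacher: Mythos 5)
Your proof is correct and follows essentially the same route as the paper's: drop the cross term via non-negativity, bound $\sum_i (S_i^*)^2$ by $NT_{\max}^2$ using the capacity constraint, and control the observed-load term via the row-stochasticity of $\bm{C}$, $x_j\in[0,1]$, and the inequality $\sum_i a_i^2 \leq (\sum_i a_i)^2$ for non-negative $a_i$. The only difference is cosmetic ordering — the paper applies the sum-of-squares inequality before collapsing the double sum, while you collapse first — and your remark about which index of $\bm{C}$ is being summed is exactly the bookkeeping the paper relies on in its step invoking Eqn.~\eqref{corr_mat_prop}.
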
 
\begin{proof}
See Appendix \ref{pf_sg_bd}. 
\end{proof}

Upon bounding the super-gradients uniformly for all $k$, the convergence of the dual algorithm follows directly from Proposition 2.2.2 and 2.2.3 of \cite{bertsekas2015convex}. In particular, we have the following theorem:
\begin{theorem} \label{bertsekas_opt}
For a given $\epsilon>0$, let the step-size $\alpha$ in Eqn. \eqref{sub_grad_step_1} be chosen as 
$\alpha = \frac{2\epsilon}{A_{\max}^2+NT^2_{\max}} $. 
Then, 
 \begin{itemize}
 \item The sequence of solutions produced by the dual algorithm described above converges within an $\epsilon$-neighbourhood the optimal objective value of the problem $\textbf{P}_1$.
 \item The rate of convergence of the algorithm to the $\epsilon$-neighborhood of the optima after $k$-steps is given by $c/\sqrt{k}$ where $c \sim \Theta(\sqrt{N})$.
 \end{itemize}
\end{theorem}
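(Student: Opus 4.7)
The plan is to invoke the standard convergence analysis for the projected subgradient method applied to the (concave) dual problem \eqref{dual_prob}, using the uniform subgradient bound from Lemma~\ref{subgrad_bounded} to turn abstract hypotheses into the explicit constants appearing in the theorem. The update \eqref{sub_grad_step_1} is exactly the projected supergradient iteration $\bm{\mu}(k{+}1) = [\bm{\mu}(k) + \alpha\,\bm{g}(\bm{\mu}(k))]^+$ on the non-negative orthant, and Lemma~\ref{subgrad_bounded} gives the uniform bound $\|\bm{g}(\bm{\mu}(k))\|_2^2 \leq G^2$ with $G^2 := A_{\max}^2 + N T_{\max}^2$. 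Together with the convexity of the dual problem (already stated after \eqref{dual_prob}) and the zero duality gap (which follows from Slater-type conditions implicit in the assumptions on $g_i,h_i$), these are precisely the hypotheses of Propositions~2.2.2 and 2.2.3 in \cite{bertsekas2015convex}, so the bulk of the work is bookkeeping the constants.

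First I would record the one-step descent-type inequality for the dual iterates. Using non-expansiveness of projection onto the non-negative orthant and expanding the squared norm,
\begin{eqnarray*}
\|\bm{\mu}(k{+}1)-\bm{\mu}^*\|_2^2 &\leq& \|\bm{\mu}(k)-\bm{\mu}^*\|_2^2 \\
&& -\, 2\alpha\bigl(D(\bm{\mu}^*)-D(\bm{\mu}(k))\bigr) + \alpha^2\|\bm{g}(\bm{\mu}(k))\|_2^2.
\end{eqnarray*}
Telescoping this from $j=0$ to $k-1$ and using Lemma~\ref{subgrad_bounded} to upper bound each $\|\bm{g}(\bm{\mu}(j))\|_2^2$ by $G^2$, I obtain
\begin{eqnarray*}
\min_{0\leq j < k}\bigl(D(\bm{\mu}^*)-D(\bm{\mu}(j))\bigr) \leq \frac{\|\bm{\mu}(0)-\bm{\mu}^*\|_2^2}{2\alpha k} + \frac{\alpha G^2}{2}.
\end{eqnarray*}
Substituting the prescribed step size $\alpha = 2\epsilon/G^2$ makes the second term equal to $\epsilon$, proving the first bullet: the best-so-far dual value lies within $\epsilon$ of $D(\bm{\mu}^*)$, and by the zero-duality-gap conclusion of standard Lagrangian theory the corresponding primal iterate is within $\epsilon$ of the optimum of $\textbf{P}_1$.

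For the second bullet, I would use the same telescoped inequality but optimize the trade-off between the two terms at step $k$. Balancing $\|\bm{\mu}(0)-\bm{\mu}^*\|_2^2/(2\alpha k)$ against $\alpha G^2/2$ gives an effective error bound of the form $R\,G/\sqrt{k}$, where $R = \|\bm{\mu}(0)-\bm{\mu}^*\|_2$. Since Lemma~\ref{subgrad_bounded} yields $G = \sqrt{A_{\max}^2 + N T_{\max}^2} = \Theta(\sqrt{N})$ as $N$ grows (with $A_{\max},T_{\max}$ held fixed), and $R$ is a problem-dependent constant independent of $k$, the rate is $c/\sqrt{k}$ with $c = \Theta(\sqrt{N})$.

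The main obstacle is not algebraic but conceptual: carefully reconciling the \emph{best-iterate} guarantee produced by the telescoping argument with the \emph{last-iterate} framing implicit in Algorithm~\ref{algo_dual}, and justifying the passage from a bound on dual sub-optimality to a bound on primal sub-optimality (which requires strong duality and a continuity/compactness argument on the primal recovery map $\bm{\mu}\mapsto(\bm{x}^*(\bm{\mu}),\bm{S}^*(\bm{\mu}))$ defined by \eqref{dual_sep}). Both of these points are handled in the cited propositions of \cite{bertsekas2015convex}, so the proof reduces to verifying that their hypotheses are satisfied and tracking constants, which is exactly what Lemma~\ref{subgrad_bounded} accomplishes.
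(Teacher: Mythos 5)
Your proposal is correct and follows essentially the same route as the paper: the paper's entire proof consists of establishing the uniform supergradient bound (Lemma \ref{subgrad_bounded}) and then citing Propositions 2.2.2 and 2.2.3 of \cite{bertsekas2015convex}, which contain exactly the projected-supergradient one-step inequality, telescoping argument, and constant-tracking that you spell out. Your version is simply a more explicit unpacking of the cited propositions, with the same identification of $G^2 = A_{\max}^2 + NT_{\max}^2$ driving both the choice of $\alpha$ and the $\Theta(\sqrt{N})$ constant in the rate.
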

The above result states that the rate of convergence of the dual algorithm decreases roughly at the rate of $\Theta(\sqrt{N})$, where $N$ is the total number of nodes in the system. This is expected as more nodes in the system would warrant greater amount of inter-node coordination to converge to the optima.

\subsection{\textsc{FastControl} Packets and Distributed Implementation} \label{dist_imp}
In the previous subsection, we derived a provably optimal load-management algorithm, which is implementable in practice, provided each node $i$ knows how to obtain the value of $\beta_i(\bm{\mu}(k))$ in a decentralized fashion. To accomplish this, we now introduce the novel idea of \textsc{FastControl} packets.  In brief, it exploits the underlying anycast architecture of the system (Eqn.\eqref{S_eqn}) for \emph{in-network computation} of the coupling factor $\beta_i(\bm{\mu}(k))$  (Eqn. \eqref{beta_eqn}) for all $i$. \\
\textsc{FastControl} packets are special-purpose control packets (different from the regular data-packets), each belonging to any one of the $N$ distinct \emph{categories}. The category of each \textsc{FastControl} packet is encoded in its packet-header. These packets are generated in a controlled manner by using a javascript embedded in responses to user DNS-requests (similar to how data was generated to calculate the $C$ matrix offline \cite{NSDI_paper}).  The javascript forces users to download a small image from a URL that is not affected by the load management algorithm.  
 DNS-servers in each node are configured to respond back with anycast IP address for the primary layer (i.e. $\mathcal{I}_1$) for this special DNS-query.  The use of various categories of FastControl packets will be clear from the description of the following distributed protocol used for determining $\beta_i(\bm{\mu}(k))$:\\
\begin{itemize} 
\item  At step $k$, each node $i$ \textbf{forces} generation of \textsc{FastControl} packets of category $j$ (through its response to DNS-queries) at the rate 
\begin{eqnarray} \label{rate_gen}
r_{ij}(k)=\gamma\mu_i(k) \frac{C_{ji}}{C_{ij}}, \hspace{10pt} j=1,2,\ldots, N
\end{eqnarray}
Note that this is locally implementable, since the value of the dual variable $\mu_i(k)$ is locally available at each node $i$. Here $\gamma>0$ is a fixed system parameter, indicating the rate of control packet generation. 
\item At each step $k$, each node $i$ also \textbf{monitors} the rate of reception of \textsc{FastControl} packet of category $i$, denoted by $R_i(k)$. Using equation \eqref{rate_gen}, the total rate of reception $R_i(k)$ of $i$\textsuperscript{th} category \textsc{FastControl} packets at node $i$ is obtained as follows  
\begin{eqnarray*}
R_i(k)&=&\sum_{j=1}^{N}r_{ji}(k)C_{ji}= \sum_{j=1}^{N} \gamma \mu_j(k) \frac{C_{ij}}{C_{ji}} C_{ji}\\
&=& \gamma \beta_i(\bm{\mu}(k))
\end{eqnarray*}
Thus, 
\begin{eqnarray}
\beta_i(\bm{\mu}(k))= \frac{1}{\gamma}R_i(k)
\end{eqnarray}
\end{itemize}
Hence, the value of $\beta_i(\bm{\mu}(k))$ at node $i$ can be obtained locally by monitoring the rate of receptions of \textsc{FastControl} packets at the collocated proxy.
  A complete pseudocode of the algorithm is provided below. See Figure (\ref{dual_fig}) for a schematic diagram of a node implementing the algorithm. 
 \begin{figure}
 \centering
	\begin{overpic}[scale=0.67]{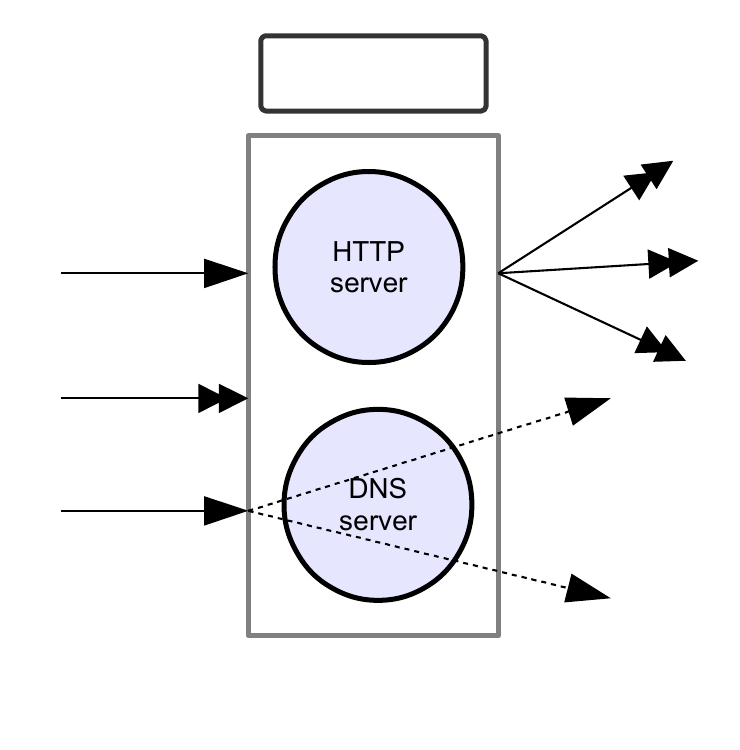}
	\put(13,68){\small{$S_i^{\text{obs}}(k)$}}
	\put(13,35){\small{$A_i$}}
	\put(-11,55){\small{(From} }
	\put(-5,50){\small{$L_1)$}}
	\put(5,54){\small{$\bigg\{$}}
	\put(13,50){\small{$R_i(k)$}}
	\put(-11,35){\small{(From} }
	\put(-11,28){\small{LDNS)}}
	\put(36,89){\small{$\mu_i(k),$}}
	\put(51,89){\small{$S_i(k)$}}
	\put(8,100){\small{\textbf{Inputs}}}
	\put(33,100){\small{\textbf{State Variables}}}
	\put(83,100){\small{\textbf{Outputs}}}
	\put(108,72){\small{\emph{FastControl}}}
	\put(115,66){\small{Pkts}}
	\put(112,59){\small{(to $L_1$)}}
	\put(89,68){\small{$\vdots$}}
	\put(89,56){\small{$\vdots$}}
	\put(92,67){\small{$r_{ij}(k)$}}
	\put(110,33){\small{Data Pkts}}
	\put(82,44){\small{$x_i(k)A_i$}}
	\put(82,37){\small{(to $L_1$)}}
	\put(81,20){\small{$(1-x_i(k))A_i$}}
	\put(86,12){\small{(to $L_2$)}}
	\put(105,66){\small{$\bigg\}$}}
	\put(105,33){\small{\bigg\}}}
	\put(42,4){\small{Node $i$}}
	\end{overpic}
\caption{\small{Node-$i$ implementing the dual algorithm}}
\label{dual_fig}
\end{figure}
\begin{algorithm}[H]
 \small
    \begin{algorithmic}[1]
   
      \STATE \emph{Initialize:}  $\mu_i(0)\gets 0$
      \FOR {$k=1,2,3,\ldots $}
         \STATE \textbf{Monitor} $A_i(k), S^{\text{obs}}_i(k), R_i(k)$;
         \STATE Set $\beta_i(k)\gets \frac{1}{\gamma}R_i(k) ;$
         \STATE \emph{Update Primal variables:}
          \begin{eqnarray*}
           S_i(k)&\gets&\inf_{0\leq S_i\leq T_i} \big(g_i(S_i)-\mu_i(k)S_i\big)\\
           x_i(k)&\gets &\inf_{0\leq x_i\leq 1}\big(h_i(x_i)+A_i(k)\beta_i(k)x_i\big)
          \end{eqnarray*}
          \STATE \emph{Update Dual variable:}
          \begin{eqnarray*}
           \mu_i(k+1) \gets \big(\mu_i(k)+ \alpha (S_i^{\text{obs}}(k)-S_i(k))\big)^+
          \end{eqnarray*}

            \STATE Via DNS-response, force users to \textbf{generate} \textsc{FastControl} packets of category $j$, destined to $L_1$, at the rate
            \begin{eqnarray*} \label{generation}
             r_{ij}(k+1)= \gamma\mu_i(k+1) \frac{C_{ji}}{C_{ij}}, \hspace{10pt} \forall j=1,2,\ldots, N
            \end{eqnarray*}
         \STATE  For an incoming DNS-query, \textbf{respond} with the anycast IP-address for $L_1$ with probability $x_i(k)$ and IP-address for $L_2$ with probability $(1-x_i(k))$.
      \ENDFOR
    \end{algorithmic}
    \caption{Distributed Dual Decomposition Algorithm Running at Node $i$}
    \label{algo_dual}
    \end{algorithm}

\section{The Greedy Load-Management Heuristic} \label{heuristic}
In this section, we focus exclusively on the distributed load management heuristic implemented in FastRoute \cite{NSDI_paper}, a commercial layered-CDN. This heuristic ignores inter-node correlations altogether. Thus, when an individual proxy becomes overloaded, the co-located DNS-server modifies its DNS-response to redirect more traffic to the data-centers ($L_2$) and vice versa. This simple mechanism is reported to work well in practice when there is high correlation (60-80\%) between the node receiving the DNS-query and the node receiving the corresponding HTTP-request. However, in the event of sudden bursts of traffic, e.g., \emph{Flash Crowds}, this greedy heuristic often leads to an uncontrollable overload situation which necessitates manual intervention \cite{NSDI_paper}. 
%
%
%

\textbf{Algorithmic challenges:} With only local information available at each DNS-server, it faces the following dilemma: offload too little to $L_2$ and the collocated proxy, if overloaded, remains overloaded; offload too much and the users are directed to remote data-centers and receive an unnecessarily delayed response, due to high round-trip latency. The coupling among the nodes because of inter-node correlation makes this problem highly challenging and gives rise to the so-called uncontrollable overload, discussed earlier in section \ref{u_o}. 



 A pseudo-code showing the general structure of FastRoute's heuristic, running at a node $i$, is provided below. An explicit control-law modeling the heuristic will be given in section \ref{analysis_greedy}. 
%

\begin{algorithm} 
\small
    \begin{algorithmic}[1]
\FOR{ $t=1,2,3,\ldots$ }
\IF{ the $i$\textsuperscript{th} proxy is \textbf{under loaded} ($S_i(t) \leq T_i$) }
\STATE \indent \indent \textbf{increase} $x_i(t)$  proportional to $-(S_i(t)-T_i)$
\ELSE 
\STATE \indent \indent \textbf{decrease} $x_i(t)$ proportional to ($S_i(t)-T_i$)
\ENDIF
\ENDFOR
 \caption{Decentralized greedy load-management heuristic used in \emph{FastRoute} \cite{NSDI_paper}, running at the node $i$}
 \label{algo_greedy}
\end{algorithmic}
\end{algorithm}

\emph{Motivation for analyzing the heuristic:} The optimal algorithm of section-\ref{optimization} requires the knowledge of the correlation matrix $\bm{C}$ and needs to utilize additional \textsc{FastControl} packets for its operation. In this section we analyze performance of the greedy heuristic, currently implemented in \emph{FastRoute} \cite{NSDI_paper}, which does not have these implementation complexities. Since this heuristic completely ignores the inter-node correlations (given by the matrix $\bm{C}$), it cannot be expected to achieve the optima of the problem $\bm{P}_1$, in general. Instead, we measure its performance by a coarser performance-metric, given by the number of proxies that undergo uncontrollable overload condition (refer to section \ref{u_o}) under its action. Depending on target applications, this metric is often practically good enough for gauging the performance of CDNs.

 \subsection{Analysis of the Greedy Heuristic} \label{analysis_greedy}
 As before, let $x_i(t)$ denote the probability that an incoming DNS-query to the node $i$ at time $t$ is returned with the anycast address of the primary layer $L_1$. Hence, the total rate of incoming load to the proxy $i$ at time $t$ is given by,
\begin{eqnarray} \label{lin}
S_i(t)=\sum_{j=1}^{N} C_{ji}A_j x_j(t) , \hspace{10pt} i=1,2,3,\ldots,N
\end{eqnarray}
The above system of linear equations can be compactly written as follows
\begin{eqnarray}
\bm{S}(t)=\bm{B}\bm{x}(t)
\end{eqnarray}
Where, 
\begin{eqnarray}
\bm{B}\equiv \bm{C}^T\textbf{diag}(\bm{A}).
\end{eqnarray} 

Let the vector $\bm{T}$ denote the processing-capacities (thresholds) of the proxies. As described above, FastRoute's greedy heuristic \eqref{algo_greedy} monitors the overload-metric $S_i(t)-T_i$ and if it is positive (i.e., the node $i$ overloaded), it reduces $x_i(t)$ (i.e., $\frac{dx_i(t)}{dt}<0$) and if the overload-metric is negative (i.e., the node $i$ under-loaded), it increases $x_i(t)$ (i.e., $\frac{dx_i(t)}{dt}>0$) proportional to the overload. We consider the following explicit control-law complying with the above general principle:
\begin{eqnarray} \label{ode}
\frac{dx_i(t)}{dt} = -\beta R(x_i(t))\big(\bm{B}\bm{x}(t)-\bm{T}\big)_i, \hspace{10pt} \forall i
\end{eqnarray}
The factor $R(x_i(t))\equiv x_i(t)(1-x_i(t))$ is a non-negative \emph{damping} component, having the property that $R(0)=R(1)=0$. This non-linear factor is responsible for restricting the trajectory of $\bm{x}(t)$ to the $N$-dimensional unit hypercube $\bm{0}\leq \bm{x}(t) \leq \bm{1}$, ensuring the feasibility of the control \eqref{ode}\footnote{Remember that $x_i(t)$'s, being probabilities, must satisfy $0\leq x_i(t)\leq 1, \forall t, \forall i$}. The scalar $\beta>0$ is a sensitivity parameter, relating the robustness of the control-strategy to the local observations at the nodes.   \\
The following theorem establishes soundness of the control \eqref{ode}:
\begin{theorem} \label{existence_uniqueness_thm}
Consider the following system of ODE
\begin{eqnarray} \label{eq1}
\dot{x_i}(t)= -R(x_i(t))(\bm{B}\bm{x}(t)-T)_i, \hspace{10pt} \forall i
\end{eqnarray}
where $R:[0,1]\to \mathbb{R}_+$ is any $\mathcal{C}^{1}$ function, satisfying  $R(0)=R(1)=0$. \\
Let $\bm{x}(0)\in \text{int}({\mathcal{H}})$, where $\mathcal{H}$ is the $N$-dimensional unit hypercube $[0,1]^{N}$. Then the system \eqref{eq1} admits a unique solution $\bm{x}(t) \in \mathcal{C}^1$ such that $\bm{x}(t) \in \mathcal{H}, \forall t\geq 0$.
\end{theorem}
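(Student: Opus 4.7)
The plan is to first establish local existence and uniqueness of a $C^1$ solution via Picard-Lindelöf, then show that the hypercube $\mathcal{H}$ is forward-invariant, which simultaneously yields global existence by precluding finite-time blow-up. Observe that the vector field $F_i(\bm{x}) \equiv -R(x_i)(\bm{B}\bm{x}-\bm{T})_i$ is $C^1$ in a neighborhood of $\mathcal{H}$ (extending $R$ smoothly off $[0,1]$ if necessary), hence locally Lipschitz. Thus for the given initial condition $\bm{x}(0) \in \text{int}(\mathcal{H})$, the classical Picard-Lindelöf theorem produces a unique $C^1$ solution $\bm{x}(t)$ on some maximal interval $[0, \tau^*)$.

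The main obstacle is the invariance claim, and the key observation is that the damping factor $R(x_i)$ vanishes at both $x_i = 0$ and $x_i = 1$, making the vector field tangent to every face of $\mathcal{H}$. I would quantify this via the mean value theorem: since $R \in C^1([0,1])$ with $R(0) = R(1) = 0$, setting $L \equiv \sup_{s\in[0,1]} |R'(s)|$ gives $0 \leq R(s) \leq L\, s$ and $0 \leq R(s) \leq L(1-s)$ for all $s \in [0,1]$. On the compact set $\mathcal{H}$, the affine map $\bm{B}\bm{x} - \bm{T}$ is bounded, so $M \equiv \sup_{\bm{x} \in \mathcal{H}} \max_i |(\bm{B}\bm{x}-\bm{T})_i| < \infty$.

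I would then run a standard continuation argument. Let $\tau = \sup\{t < \tau^* : \bm{x}(s) \in \mathcal{H} \text{ for all } s \in [0,t]\}$; by continuity of the solution and $\bm{x}(0) \in \text{int}(\mathcal{H})$ we have $\tau > 0$. On $[0, \tau)$ the bounds above give $\dot{x}_i(t) \geq -R(x_i(t)) M \geq -LM\, x_i(t)$, and for $y_i(t) \equiv 1 - x_i(t)$, $\dot{y}_i(t) = R(x_i(t))(\bm{B}\bm{x}(t)-\bm{T})_i \geq -R(x_i(t)) M \geq -LM\, y_i(t)$. Grönwall's inequality then yields
\begin{eqnarray*}
x_i(t) \geq x_i(0)\, e^{-LMt}, \qquad 1 - x_i(t) \geq (1 - x_i(0))\, e^{-LMt},
\end{eqnarray*}
both strictly positive for all $t < \tau^*$ since $x_i(0) \in (0,1)$. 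By continuity $\bm{x}(\tau) \in \text{int}(\mathcal{H})$, contradicting the definition of $\tau$ unless $\tau = \tau^*$. Hence $\bm{x}(t)$ remains in the compact set $\mathcal{H}$ throughout $[0, \tau^*)$; since trajectories confined to a compact set cannot blow up in finite time, $\tau^* = \infty$, establishing both the invariance and the global existence/uniqueness statement.
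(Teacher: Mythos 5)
Your proof is correct, and its overall skeleton matches the paper's: establish forward-invariance of $\mathcal{H}$, then conclude global existence because a maximal solution confined to a compact set cannot blow up in finite time. Where you genuinely differ is in the invariance step, which is the heart of the argument. The paper's proof is qualitative: it supposes a component crosses a face of the hypercube, locates (via the intermediate value theorem) a time $t_0$ with $x_i(t_0)=0$, and observes that because $R(0)=0$ the constant function $x_i\equiv 0$ solves the $i$\textsuperscript{th} scalar equation forward from $t_0$; uniqueness for that (non-autonomous, $\mathcal{C}^1$) scalar ODE then pins the trajectory to the face, contradicting the assumed crossing. You replace this uniqueness argument with a quantitative one: the mean value theorem gives $R(s)\le Ls$ and $R(s)\le L(1-s)$, and Gr\"{o}nwall yields the explicit bounds $x_i(t)\ge x_i(0)e^{-LMt}$ and $1-x_i(t)\ge (1-x_i(0))e^{-LMt}$. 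This buys you strictly more than the theorem asks for --- the trajectory stays in the open interior of $\mathcal{H}$ for all finite time, with an explicit rate of approach to the boundary --- and it sidesteps invoking uniqueness for an auxiliary scalar equation along an as-yet-hypothetical solution. The paper's route is shorter and estimate-free but yields only closed-set invariance. Both arguments use $R\in\mathcal{C}^1$ and $R(0)=R(1)=0$ in an essential way, and your continuation/blow-up bookkeeping at the end is handled correctly.
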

\begin{proof}
 See Appendix \ref{existence_uniqueness_proof}.
\end{proof}
The following theorem reveals an interesting feature of the greedy algorithm, which states that, along any periodic trajectory the average load at any node $i$ is equal to the threshold $T_i$ of that node, and hence they are stable \emph{on the average}. 
\begin{theorem} \label{limit_cycle}
Consider the system \eqref{ode} with possibly time-varying DNS-influenced arrival rate vector $\bm{A}(t)$ such that the system operates in a periodic orbit. Then the time-averaged http-load on any node $i$ is equal to the threshold $T_i$ of that node.  
\end{theorem}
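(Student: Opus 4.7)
The plan is to exploit a logit-type transformation of each coordinate that converts the nonlinear damping factor $R(x_i) = x_i(1-x_i)$ into a pure time derivative, turning the ODE into an equation whose right-hand side is directly the (negative) instantaneous overload. Periodicity then forces the time-average of the overload to vanish.

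Concretely, I would first define, for each node $i$, the ``logit'' potential
\begin{equation*}
\phi_i(t) \;=\; \log\!\left(\frac{x_i(t)}{1 - x_i(t)}\right),
\end{equation*}
which is well-defined as long as $x_i(t) \in (0,1)$. A direct computation gives
\begin{equation*}
\dot{\phi}_i(t) \;=\; \frac{\dot{x}_i(t)}{x_i(t)\bigl(1 - x_i(t)\bigr)} \;=\; \frac{\dot{x}_i(t)}{R(x_i(t))}.
\end{equation*}
Substituting the control law \eqref{ode} and recalling that $(\bm{B}\bm{x}(t))_i = S_i(t)$, the damping factor cancels and one obtains the clean identity
\begin{equation*}
\dot{\phi}_i(t) \;=\; -\beta\bigl(S_i(t) - T_i\bigr).
\end{equation*}

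Next, assume the trajectory is periodic with period $T_{\mathrm{per}}$, so that $x_i(T_{\mathrm{per}}) = x_i(0)$, and hence $\phi_i(T_{\mathrm{per}}) = \phi_i(0)$. Integrating the identity above over one period gives
\begin{equation*}
0 \;=\; \phi_i(T_{\mathrm{per}}) - \phi_i(0) \;=\; -\beta \int_0^{T_{\mathrm{per}}} \bigl(S_i(t) - T_i\bigr)\,dt,
\end{equation*}
which immediately yields $\tfrac{1}{T_{\mathrm{per}}}\int_0^{T_{\mathrm{per}}} S_i(t)\,dt = T_i$, as claimed.

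The only subtle step, and the place I expect the main obstacle to lie, is justifying that $\phi_i$ is well-defined along the entire periodic orbit, i.e.\ that $x_i(t) \in (0,1)$ for all $t$. For a generic initial condition in the interior of $\mathcal{H}$, Theorem~\ref{existence_uniqueness_thm} already establishes $\bm{x}(t) \in \mathcal{H}$; the stronger statement $x_i(t) \in (0,1)$ follows from the same argument by noting that the faces $\{x_i = 0\}$ and $\{x_i = 1\}$ are invariant under \eqref{ode} (since $R(x_i)$ vanishes there), so by uniqueness no interior trajectory can reach them in finite time. A periodic orbit that is not trivially stuck at a corner therefore stays uniformly bounded away from the boundary on $[0, T_{\mathrm{per}}]$, which is all that is needed to legitimately apply the fundamental theorem of calculus to $\phi_i$ and close the argument.
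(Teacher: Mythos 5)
Your proof is correct and is essentially the paper's own argument: the paper separates variables, integrates $dx_i/R(x_i) = -(S_i(t)-T_i)\,dt$ over one period, and uses an (abstract) anti-derivative $J$ of $1/R$ whose boundary terms cancel by periodicity — your logit $\phi_i$ is exactly that anti-derivative made explicit. Your added justification that a periodic orbit stays in the open cube (invariance of the faces $\{x_i=0\}$ and $\{x_i=1\}$) is a point the paper merely asserts, so it is a welcome but non-essential refinement.
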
 
\begin{proof}
See Appendix \ref{limit_cycle_proof}.  
\end{proof}

\subsection{Avoiding Locally Uncontrollable Overload}
Having established the feasibility and soundness of the control-law \eqref{ode}, we return to the original problem of locally uncontrollable overload, described in section \ref{u_o}. In the following, we derive sufficient conditions for the correlation matrix $\bm{C}$ and the external arrival rate $\bm{A}$, for which the system is stable, in the sense that no locally uncontrollable overload situation takes place. 

\subsection*{Characterization of the Stability Region}
For a fixed correlation matrix $\bm{C}$, we show that if the arrival rate vector $\bm{A}$ lies within a certain polytope $\bm{\Pi}_{\bm{C}}$, the system is stable in the above sense, under the action of the greedy load-management heuristic. The formal derivation of the result is provided in Appendix \ref{stability_region}, which involves linearization of the ODE \eqref{ode} around certain fixed points. Here we outline a simple and intuitive derivation of the stability region $\bm{\Pi}_{\bm{C}}$. \\
We proceed by contradiction. Suppose that node $i$ is facing a locally uncontrollable overload at time $t$. Hence, by definition, the following two conditions must be satisfied at node $i$
\begin{eqnarray}
 S_i(\infty)-T_i >0 \label{ol}, \hspace{5pt}\text{and} \hspace{5pt}
 x_i(\infty)=0 \label{uc}
\end{eqnarray}
Here Eqn. \eqref{ol} denotes the fact that FastRoute node $i$ is \emph{overloaded}, i.e., the incoming traffic to node $i$'s proxy is more than the capacity of the node $i$. Eqn. \eqref{uc} denotes the fact that this overload is \emph{locally uncontrollable}, since even after node $i$'s DNS-server has offloaded \emph{all} incoming DNS-influenced arrivals to $L_2$ (the best that it can do with its local information), it is facing the overload situation. The above two equations imply that the following condition holds at the node $i$:
\begin{eqnarray} \label{cond1}
 \sum_{j \neq i} C_{ji}A_jx_j(\infty) > T_i,
\end{eqnarray}
where we have used Eqn. \eqref{lin} and the fact that $x_i(\infty)=0$. Since $0\leq x_j(\infty) \leq 1 $,
a necessary condition for uncontrollable overload \eqref{cond1} at node $i$ is $\sum_{j \neq i} C_{ji}A_j > T_i$. Thus, if $\sum_{j \neq i} C_{ji}A_j \leq  T_i$, then the locally uncontrollable overload is avoided at the node $i$ by the greedy heuristic. Taking into account all nodes, we see that if the external DNS-query arrival rate $\bm{A}$ lies in the polytope $\bm{\Pi}_{\bm{C}}$ defined as  
\begin{eqnarray} \label{polytope}
 \bm{\Pi}_{\bm{C}}=\{\bm{A}\geq \bm{0}: \sum_{j \neq i} C_{ji}A_j \leq  T_i, \hspace{5pt} \forall i=1,2,\ldots,N\}
\end{eqnarray}
then the locally uncontrollable overload situation is avoided at \emph{every} node and the system is stable. Somewhat surprisingly, by exploiting the exact form of the control-law \eqref{ode}, we also show that a two-node system (as depicted in Figure \ref{two-node_fig}) is able to control DNS-load $\bm{A}$ of any magnitude, under certain favorable conditions on the correlation matrix $\bm{C}$.


\subsubsection*{Special Case}[\textbf{Two-node System}]

Consider a two-node CDN discussed earlier in Section \ref{optimization} (see Figure \ref{two-node_fig}). Let the correlation matrix $\bm{C}$ for the system be parametrized as follows:
\begin{eqnarray}
\bm{C}(\alpha, \beta)=
\begin{pmatrix}
\alpha && 1-\alpha \\ 1-\beta && \beta 
\end{pmatrix}
\end{eqnarray}
Then we have the following theorem :
\begin{theorem} \label{two_node_lemma}
1) The system does not possess \emph{any} periodic orbit for \emph{any} values of its defining parameters: $\bm{A},\bm{C}(\alpha,\beta),\bm{T}$. Thus the system never oscillates.\\
2) If $\alpha > \frac{1}{2}$ and $\beta > \frac{1}{2}$ then the system is locally controllable (i.e., no locally uncontrollable overload) for \emph{all} arrival rate-pairs $(A_1,A_2)$.\\
3) If $\alpha < \frac{1}{2}$ and $\beta < \frac{1}{2}$ then a sufficient condition for local controllability of the system is $A_1<\frac{T_1}{1-\alpha}$, $A_2 < \frac{T_2}{1-\beta}$. 
\end{theorem}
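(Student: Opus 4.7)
For Part 1, the plan is to apply Dulac's negative criterion on the open unit square. I would take the smooth positive multiplier $\phi(x_1,x_2) = 1/[x_1(1-x_1)x_2(1-x_2)]$, which exactly cancels the damping factor $R(x_i)$ in each component of the field. The scaled system then reads $\phi\dot x_i = -(S_i-T_i)/R(x_{3-i})$ up to a positive constant, and because $S_i$ is linear in $x_i$ with coefficient $C_{ii}A_i$, taking the divergence yields an expression proportional to $-[\alpha A_1/R(x_2) + \beta A_2/R(x_1)]$, which is strictly negative throughout the open square. Dulac's criterion then forbids any closed orbit in the interior, and no closed orbit can reach the boundary of $\mathcal{H}$ because $R(0)=R(1)=0$ makes the entire boundary consist of equilibria.

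For Parts 2 and 3, the plan is to combine Part 1 with the Poincar\'e--Bendixson theorem. By Theorem \ref{existence_uniqueness_thm} the hypercube $\mathcal{H}$ is forward-invariant and compact, and by Part 1 no limit cycle exists, so every $\omega$-limit set of an interior trajectory must be an equilibrium. A locally uncontrollable overload corresponds precisely to an asymptotically stable equilibrium $x^*$ with $x_i^*=0$ and $S_i(x^*)>T_i$ for some $i$, so it suffices to rule out such equilibria. Using the $1\leftrightarrow 2$ symmetry I would assume $x_1^*=0$. Setting $\dot x_2=0$ then splits into three sub-cases: $x_2^*=0$ (immediately safe since $S_1=0$), $x_2^*=1$, or the interior value $x_2^*=T_2/(\beta A_2)$. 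For each sub-case the Jacobian is block triangular on the boundary, so stability is read off directly, and I would derive a contradiction between the overload inequality $(1-\beta)A_2 x_2^*>T_1$ and the local-stability conditions at $x_2^*$. For Part 2 the contradiction arises from $\alpha,\beta>\tfrac12$: the corner case $x_2^*=1$ requires $T_1/(1-\beta)<A_2<T_2/\beta$, an interval that collapses when $\beta>\tfrac12$ and the thresholds are comparable; the edge case likewise forces $(1-\beta)T_2/\beta > T_1$, which fails under $\beta>\tfrac12$. For Part 3 the hypotheses translate into the $\bm{\Pi}_{\bm{C}}$ polytope specialised to $N=2$, yielding $S_1(x^*)=(1-\beta)A_2 x_2^*\le (1-\beta)A_2\le T_1$ at any equilibrium with $x_1^*=0$, and a symmetric conclusion at node $2$.

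The hard part will be Part 2's edge sub-case $x_2^*=T_2/(\beta A_2)$, which the coarser $\bm{\Pi}_{\bm{C}}$ argument preceding the theorem cannot address: here $S_1(x^*)$ equals $(1-\beta)T_2/\beta$ rather than a quantity bounded by $A_2$, and the comparison with $T_1$ must exploit $\beta>\tfrac12$ specifically (the weaker condition $\alpha+\beta>1$, which only ensures uniqueness of the interior equilibrium, is not enough). A second, more cosmetic obstacle is the clean invocation of Poincar\'e--Bendixson given that the boundary of $\mathcal{H}$ is a continuum of equilibria; this is handled by Theorem \ref{existence_uniqueness_thm}, which confines interior trajectories to the interior, so the candidate $\omega$-limit sets are exactly the equilibria enumerated above. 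The remaining verifications---the Jacobian sign patterns at the corner and edge equilibria, and the algebraic reduction to $\bm{\Pi}_{\bm{C}}$ in Part 3---I would leave to direct computation.
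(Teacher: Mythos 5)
Your proposal matches the paper's proof essentially step for step: Part 1 uses the identical Dulac multiplier $g(\bm{x})=1/[x_1(1-x_1)x_2(1-x_2)]$ with the same divergence computation, and Parts 2--3 proceed, as in the paper, by enumerating the boundary equilibria (corners and the edge points $x_1=0,\,x_2=T/(\beta A_2)$ and its mirror) and reading off stability from the block-triangular Jacobian, with the edge equilibria supplying exactly the $\beta<\tfrac12$ (resp.\ $\alpha<\tfrac12$) stability conditions. Your added Poincar\'e--Bendixson justification and your caveat about the thresholds being comparable are small refinements the paper leaves implicit (its proof silently sets $T_1=T_2=T$), but the approach is the same.
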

\begin{proof}
The proof of part-(1) follows from Dulac's criterion \cite{strogatz2014nonlinear}, whereas proof for part-(2) and (3) follows from linearization arguments. See Appendix \ref{two_node_lemma_proof} for details.
\end{proof}
We emphasize that the part-(2) of the Theorem \ref{two_node_lemma} is surprising, as it shows that the system remains locally controllable, no matter how large the incoming DNS-query arrival rate be (c.f. Section \ref{u_o}) . The 2D vector-field plot in Figure \ref{vector_field_plot} illustrates the above result. In Figure 4(a), the matrix $\bm{C}$ is taken to be one satisfying the condition of part (2) of lemma \ref{two_node_lemma}. As shown, all four phase-plane trajectories with different initial conditions converge to an interior fixed point $(x_1(\infty)>0, x_2(\infty)>0)$. \\
For the purpose of comparison, in Figure 4(b) we plot the 2D vector-field  of a locally uncontrollable system (e.g., the system in Fig. \ref{two-node_fig}). It is observed that all four previous trajectories converge to the uncontrollable attractor $(x_1(\infty)=1,x_2(\infty)=0)$. From the vector-field plot, it is also intuitively clear why a periodic orbit can not exist in the system.

 \begin{figure}
 \begin{minipage}[b]{0.53\linewidth}
 \centering 
 \hspace*{-15pt} 
     \begin{overpic}[scale=0.25]{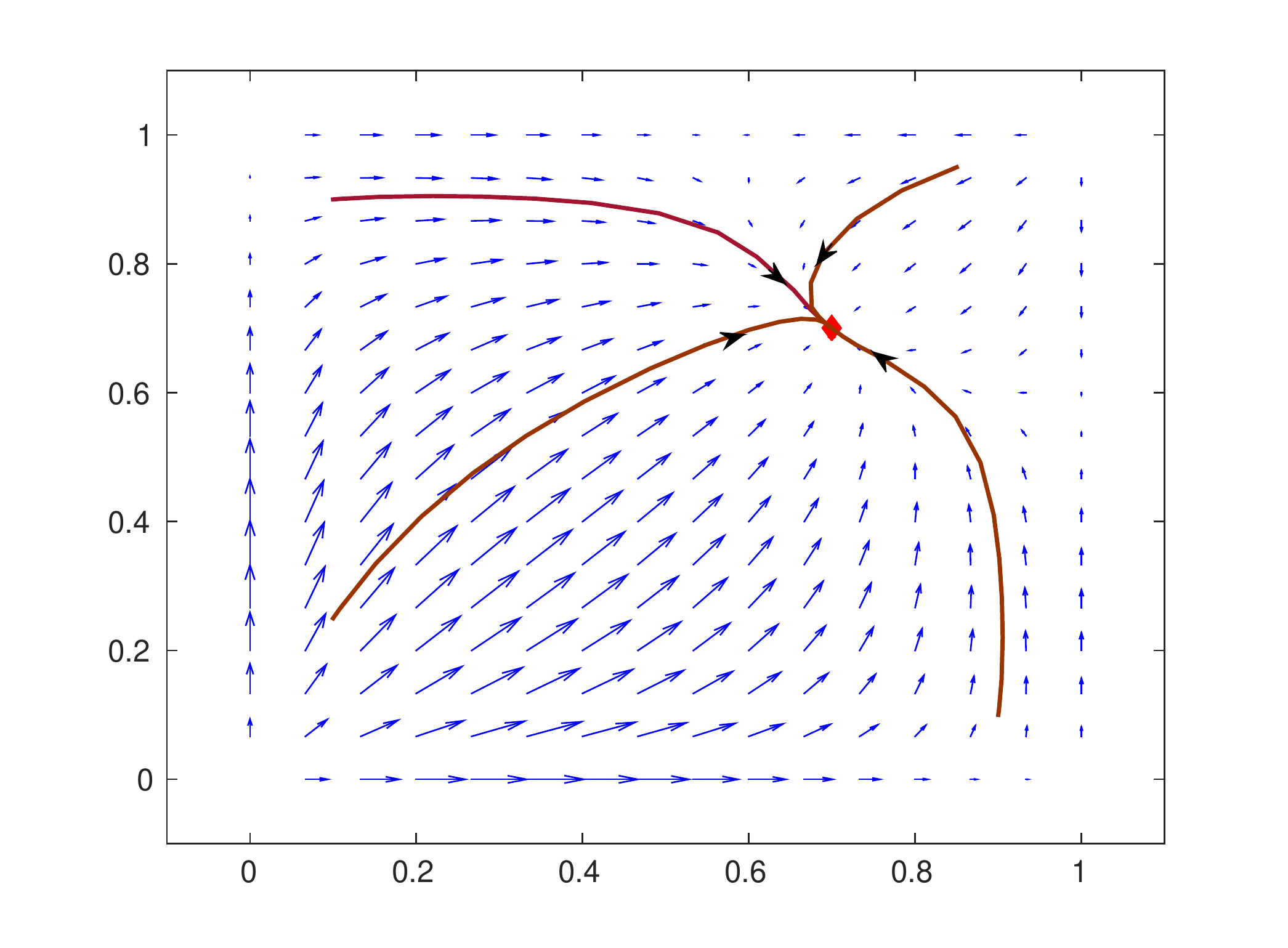}
     \put(45,-1){$\tiny{x_1(t)}$}
     \put(-0.5,38){$\tiny{x_2(t)}$}
      \put(65,49){\text{\scriptsize{(attractor)}}}
       \put(45,70){\text{\scriptsize{Fig. 4(a)}}}
    \end{overpic}
 \end{minipage}
 \begin{minipage}[b]{0.46\linewidth} 
 \centering
 \hspace*{-10pt}
  \begin{overpic}[scale=0.25]{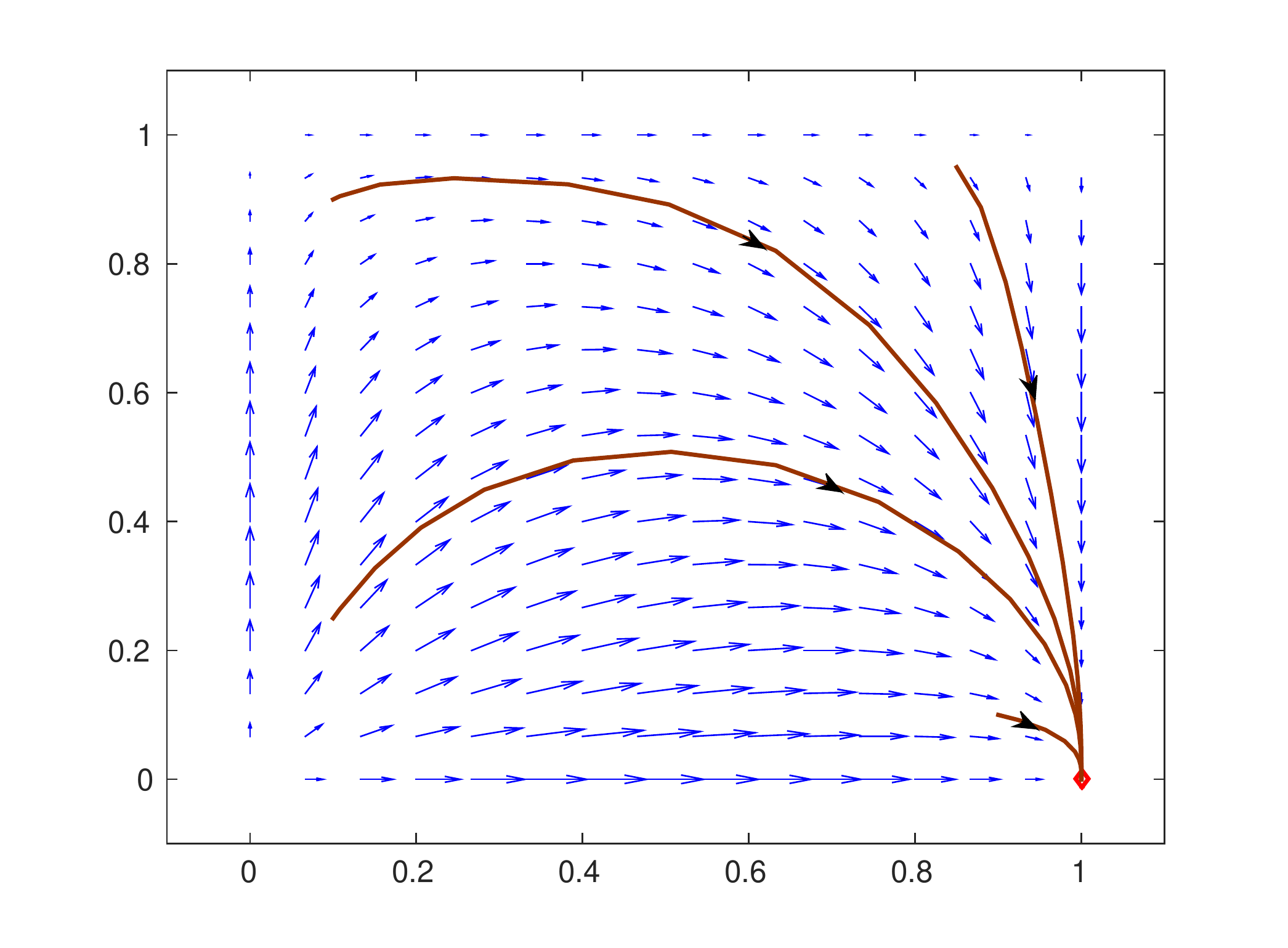}
 \put(45,-1){$\tiny{x_1(t)}$}
 \put(-0.5,38){$\tiny{x_2(t)}$}
 \put(45,70){\text{\scriptsize{Fig. 4(b)}}}
 \put(75,10){\text{\scriptsize{(attractor)}}}
   \end{overpic}
   \end{minipage}

\caption{2D vector-fields of a two-node system illustrating locally controllable (Fig. $4(a)$) and locally uncontrollable (Fig. $4(b)$) overloads.}
\label{vector_field_plot}
\end{figure}
\paragraph*{Application of Theorem \ref{two_node_lemma}}
Consider a setting where, instead of making locally greedy offloading decisions, nodes are permitted to partially coordinate their actions. Also assume that there exists a partition of the set of nodes into two non-empty disjoint sets $G_1$ and $G_2$, such that their effective self-correlation values $\alpha(G_1)$ and $\beta(G_2)$, defined by 
\begin{eqnarray*}
\alpha(G_1)=\frac{\sum_{i \in G_1}\sum_{j \in G_1}C_{ij}}{|G_1|}, \beta(G_2)=\frac{\sum_{i \in G_2}\sum_{j \in G_2}C_{ij}}{|G_2|}
\end{eqnarray*}
satisfy the condition (2) of Theorem \ref{two_node_lemma}, i.e. $\alpha(G_1)>\frac{1}{2}, \beta(G_2)> \frac{1}{2}$. Then if the nodes in the sets $G_1$ and $G_2$ coordinate and jointly implement the greedy policy, then the system is locally controllable for all symmetric arrivals.  




\section{Numerical Evaluations}\label{simulations}
We use the operational FastRoute CDN to identify relevant system parameters for critical evaluations of the optimal algorithm and the heuristic. Currently, FastRoute has $N=48$ operational nodes, spreading throughout the world \cite{NSDI_paper}. The inter-node correlation matrix $\bm{C}$ is computed using system-traces spanning over three months.\\
For our performance evaluations, we use the cost-functions given in Eqns. \eqref{ex1} and \eqref{ex2}. Different (normalized) parameters appearing in the cost-functions are chosen as follows 
\begin{eqnarray}
\gamma_i=10, \eta_i=1, T_i=0.7, d_i \sim U_i[0,1]\hspace{25pt}\forall i
\end{eqnarray}
where $U_i[0,1]$'s denote i.i.d. uniformly distributed random variables in the range $[0,1]$. The DNS-query arrival rates $A_i$'s are assumed to be i.i.d. distributed  according to a Poisson variable with mean $\bar{A}$, which varies across the range $[0.1,10]$. \\
The optimal solutions of the Lagrangian relaxations in Eqns. \eqref{dual_sep} for the above cost-functions can be obtained in closed form as follows :
\begin{eqnarray}
S_i^*(\bm{\mu})=T_i\max \bigg(0, 1-\sqrt{\frac{\eta_i}{\mu_i}}\bigg)
\end{eqnarray}
and,
\begin{eqnarray}
x_i^*(\bm{\mu})=\begin{cases}
1, \hspace{15pt}\text{if  }c_{2i}>0\\
1+\frac{c_{2i}}{2c_{1i}}, \text{  if  } c_{2i}\leq 0 \text{ and }2c_{1i}\geq -c_{2i}\\
0, \hspace{15pt}\text{o.w.}
\end{cases}
\end{eqnarray}
where $c_{1i}\equiv A_i\gamma_i$ and $c_{2i}\equiv \gamma_id_i-\beta_i(\bm{\mu})$.\\
For each value of $\bar{A}$, we run the simulation $N_E=100$ times, by randomizing over both $A_i$ and $d_i, \forall i$. The mean and standard-deviation of the resulting optimal cost values are plotted in the Figure \ref{cost_function_opt}(a). As expected, the average cost increases as the overall DNS-query arrivals to the system increases. However, the resulting cost remains finite always. This implies that \emph{none} of the proxies are overloaded. \\
\begin{figure}
 \begin{minipage}[b]{0.54\linewidth}
 \centering
 \hspace*{-37pt} 
     \begin{overpic}[scale=0.242]{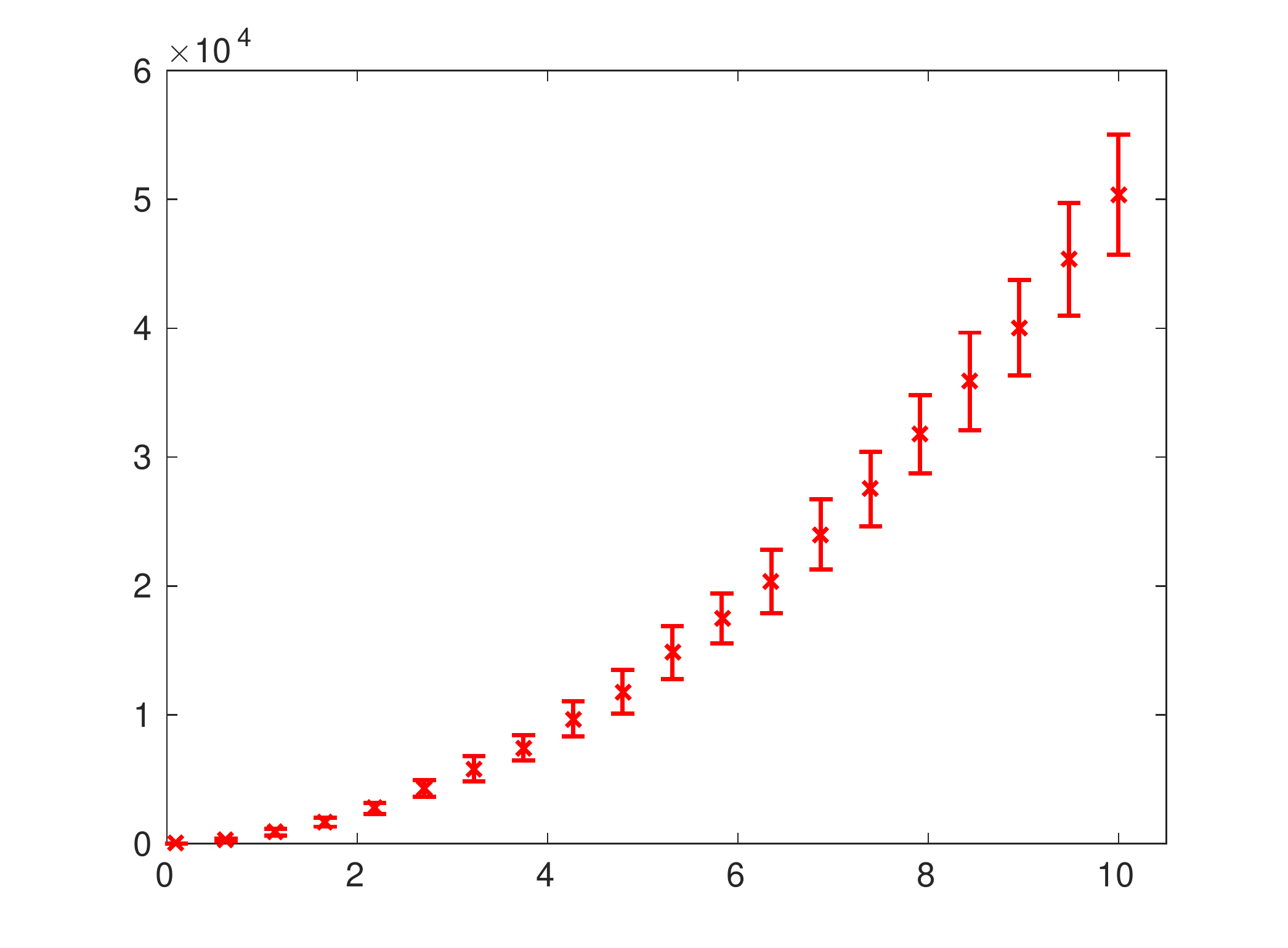}
     \put(37,-1){\scriptsize{Mean load ($\bar{A}$)}}
     \put(50,-7){\scriptsize{(a)}}
    \put(20,50){\scriptsize{Optimal algorithm}}
     \put(3,8){\rotatebox{90}{\scriptsize{Cost of the Optimal Algorithm}}}
    \end{overpic}
  \end{minipage}  
   \begin{minipage}[b]{0.45\linewidth}
    \label{greedy_overload_plot}
    \centering
    \hspace*{-30pt} 
  \begin{overpic}[scale=0.20]{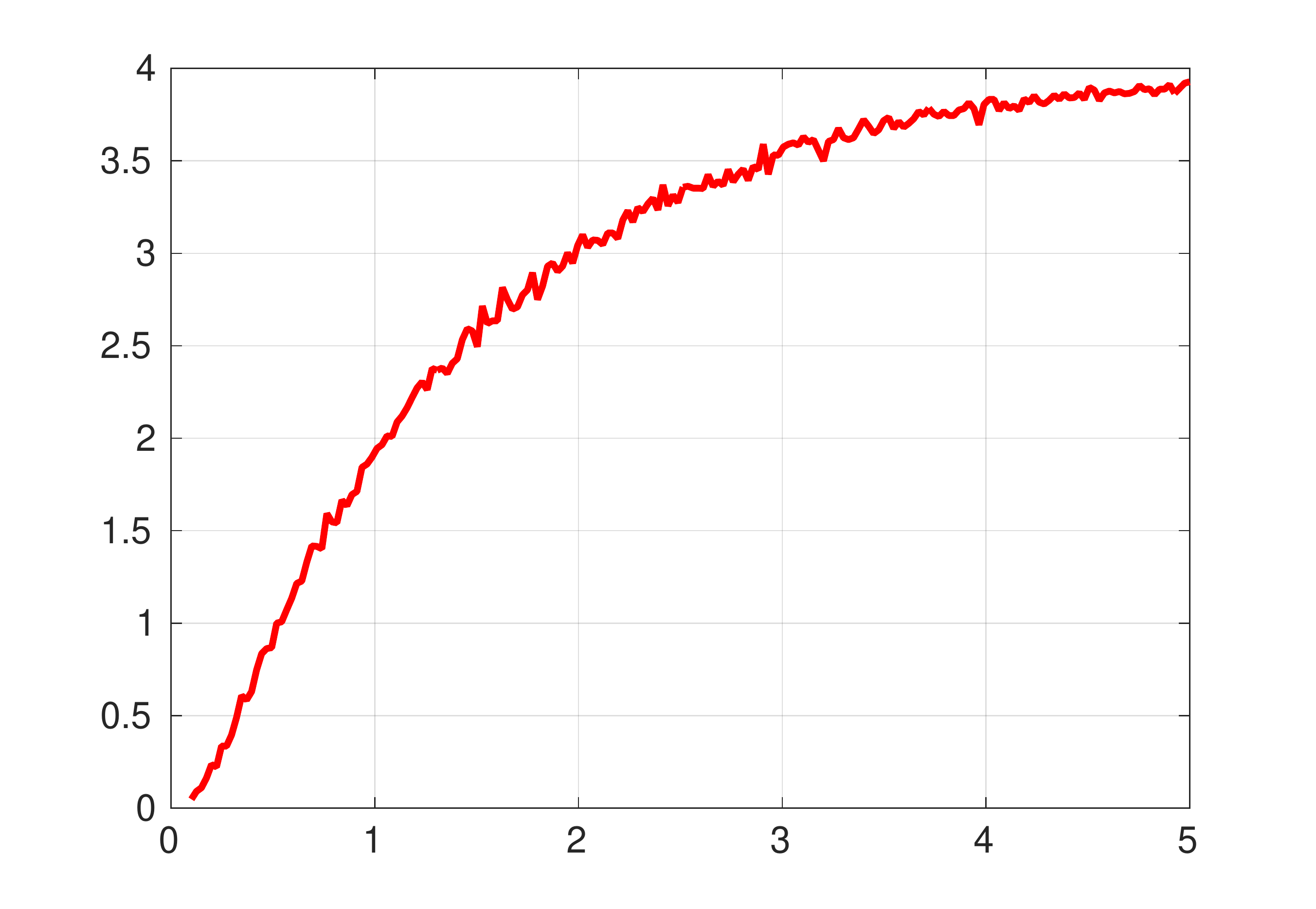}
 \put(37,-1){\scriptsize{Mean load ($\bar{A}$)}}
 \put(47,30){\scriptsize{Greedy heuristic}}
 \put(50,-7){\scriptsize{(b)}}
 \put(3.5,12){\rotatebox{90}{\scriptsize{\# of overloaded nodes}}}
   \end{overpic}
\end{minipage}
 \caption{(a) Statistical variation of the cost of the optimal algorithm with mean load (DNS-query arrival rate) $\bar{A}$. (b) Average number of nodes undergoing uncontrollable overload condition under the action of the greedy algorithm (Threshold $T_i=0.7$ for all nodes).}
\label{cost_function_opt}
\end{figure}
The above observation is in sharp contrast with the situation using the greedy-heuristic, the subject of Figure \ref{cost_function_opt}(b). Here we plot the number of overloaded proxies for different values of $\bar{A}$, keeping all other system-parameters the same. While the greedy algorithm does yield acceptable result for small values of $\bar{A}<< T_i=0.7$, we see that as many as $4$ proxies (out of a total of $48$ proxies) undergo locally uncontrollable overload for relatively large values of $\bar{A}$.  \\
Thus, depending on the computed correlation-matrix $\bm{C}$ and a projected bound of the DNS-query arrival rate $\bm{A}$, we can make an informed decision about the choice of the algorithms to employ in a CDN and the inherent complexity-vs-optimality trade-off it entails. 

\section{Conclusion}
In this paper we formalize the load management problem in modern CDNs utilizing anycast. We first formulate the problem as a convex optimization problem and study its dual and an associated algorithm. The novel idea of \textsc{FastControl} packets facilitates distributed implementation of the proposed dual algorithm. Next we analyze stability properties of a greedy heuristic, currently in operation in a major commercial CDN. We find that the optimal algorithm significantly out-performs the heuristic for moderate-to-high value of system load. However, the heuristic performs satisfactorily given that the system is \emph{loosely-coupled} and the offered load is low. Thus an informed choice between these two algorithms may be made depending on the range of system-parameters and the desired optimality/complexity trade-off for a particular CDN. Future work would involve investigating the amount of \textsc{FastControl} packets necessary for the dual algorithm to work in the presence of random packet-loss and delayed feedback. Also, It would be interesting to generalize the findings of Theorem \ref{two_node_lemma} for the case of more than two nodes.

\bibliographystyle{IEEEtran}
\bibliography{Sigcomm_submission}
\section{Appendix}
\subsection{Proof of Lemma \ref{subgrad_bounded}} \label{pf_sg_bd}
\begin{proof}
We have, 
\begin{eqnarray}
||\bm{g}_k||_2^2 &=& \sum_{i=1}^{N} (S_i^{\text{obs}}(k)-S_i(k))^2 \\
&\leq & \sum_{i=1}^{N} \big(S_i^{\text{obs}}(k)\big)^2 + \sum_{i=1}^{N}S_i^2(k) \label{non_neg}\\
&\leq & \sum_{i=1}^{N} \big(\sum_{j=1}^{N}C_{ji}A_jx_j(k)\big)^2 + NT_{\max}^2 \label{feas1}\\
&\leq & \big(\sum_{i=1}^{N}\sum_{j=1}^{N}C_{ji}A_j\big)^2 + NT_{\max}^2 \label{feas2}\\
&=& \big(\sum_{j=1}^{N}A_j \sum_{i=1}^{N}C_{ji}\big)^2 + NT_{\max}^2 \label{order} \\
& \leq & (\sum_{j=1}^{N}A_j)^2+ NT_{\max}^2 \label{corr_mat}\\
&\leq & A_{\max}^2 + NT^2_{\max}
\end{eqnarray}
Here Eqn. \eqref{non_neg} follows from non-negativity of $S_i^{\text{obs}}$ and $S_i$, Eqn. \eqref{feas1} follows from the defining equation of $S_i^{\text{obs}}$ and the constraint that $S_i \leq T_i$ (viz. Eqn. \eqref{dual_sep}), Eqn. \eqref{feas2} follows from the constraint $0\leq x_i \leq 1, \forall i$, Eqn. \eqref{order} follows from the change of order of summation and finally Eqn. \eqref{corr_mat} follows from the fact that $\bm{C}$ is a correlation matrix and hence its rows sum to unity (viz. Eqn. \eqref{corr_mat_prop}). 
\end{proof}
\subsection{Proof of Theorem \ref{existence_uniqueness_thm}}
\label{existence_uniqueness_proof}
\begin{proof} 
First we show that, any solution of the system \eqref{ode} (if exists) must lie in the unit hypercube $\mathcal{H}$. We prove it via contradiction. On the contrary to the claim, assume that for some solution of the system \eqref{ode}, there
 exists a component $x_i(\cdot)$ and a finite time $0\leq \tau < \infty$ such that $x_i(\tau) <0$. Since $x_i(\cdot)$ is continuous and $x_i(0)>0$, by \emph{intermidiate value theorem}, there must exist a time $0<t_0<\tau$ such that $x_i(t_0)=0$. Now consider the differential equation corresponding to the $i$\textsuperscript{th} component of the system \eqref{eq1}. We substitute for all other components $\{x_j(t), j\neq i\}$ on the RHS of the following equation.
 
\begin{eqnarray} \label{eq2}
 \dot{x_i}(t)= -R(x_i)(\sum_{j}B_{ij}x_j(t)-T_i), \hspace{20pt} x_i(t_0)=0
\end{eqnarray}
 Since the vector $\bm{x}(t)$ is $\mathcal{C}^{1}$ and the regularizer $R(\cdot)$ is assumed to be $\mathcal{C}^{1}$, the RHS of the equation \eqref{eq2} is $\mathcal{C}^1$. Hence \eqref{eq2}  admits a \emph{unique local solution}. However note that the following is a solution to \eqref{eq2}
 \begin{eqnarray} \label{eq3}
  x_i(t)=0, \hspace{20pt} \forall t\geq t_0
 \end{eqnarray}
 This is because $R(0)=0$. By uniqueness, \eqref{eq3} is the \emph{only} solution to \eqref{eq2}. This contradicts the fact that $x_i(\tau)<0$. Hence $\bm{x}(t) \geq \bm{0}, \forall t\geq 0$. In a similar fashion, we can also prove that $\bm{x}(t) \leq \bm{1}, \forall t \geq 0$. This proves that all solutions to \eqref{ode} must lie in the compact set $\mathcal{H}$. \\
 To complete the proof, we observe that the RHS of  the system \eqref{ode} is locally Lipschitz at each point in the compact set $\mathcal{H}$. Thus the global existence of the solution of \eqref{ode} follows directly from Theorem 2.4 of \cite{khalil1996nonlinear}. 
 
 \end{proof}
 
 \subsection{Proof of Theorem \ref{limit_cycle}} \label{limit_cycle_proof}
 \begin{proof}
  Let $\tau$ be the period of the orbit. Consider the $i$\textsuperscript{th} differential equation 
   \begin{eqnarray} 
\dot{x_i}(t)= -R(x_i)(S_i(t)-T_i)\\
\frac{dx_i}{R(x_i)}= - (S_i(t)-T_i)dt
 \end{eqnarray}
 Since $x_i(\cdot)$ belongs to the \emph{interior} of the compact set $\mathcal{H}$, $\frac{1}{R(x_i(t))}$ does not have a zero in the denominator for the enire orbit. Hence $\frac{1}{R(x_i(t))}$ is continuous and its Riemann integral exists \cite{rudin1964principles}. Integrating both sides from $0$ to $\tau$, we have 
 \begin{eqnarray} \label{eq4}
  \int_{0}^{\tau} \frac{dx_i}{R(x_i)} = \int_{0}^{\tau} (S_i(t)-T_i) dt
 \end{eqnarray}
Let $J(x_i)$ be an anti-derivative of $\frac{1}{R(x_i)}$. Hence using the fundamental theorem of calculus \cite{rudin1964principles}, we can write the LHS of \eqref{eq4} as 
\begin{eqnarray}
 J(x_i(\tau))- J(x_i(0))= \int_{0}^{\tau} S_i(t)dt - \tau T_i
 \end{eqnarray} 
Since the orbit is assumed to have a period $\tau$, we have $x_i(\tau)=x_i(0)$. Hence $ J(x_i(\tau))- J(x_i(0))=0$. Thus we have
\begin{eqnarray*}
 \bar{S_i}\equiv\frac{1}{\tau}\int_{0}^{\tau}S_i(t)dt = T_i
\end{eqnarray*}
 \end{proof}
 
 \subsection{Formal Derivation of the Stability Condition} \label{stability_region}
 Let us write the system \eqref{ode} conveniently as $\dot{\bm{x}}= \bm{F}(\bm{x})$. Consider a fixed point $\overline{x}$ of the system such that the $k$\textsuperscript{th} node faces an uncontrollable overload condition. By \emph{Hartman-Grobman} theorem \cite{strogatz2014nonlinear}, it is enough to consider linearized version of the system to determine the stability of fixed points. The first-order Taylor expansion about the fixed point $\overline{x}$ yields the following:
\begin{eqnarray*}
\dot{\bm{x}}\approx F(\bar{\bm{x}}) + \frac{\partial \bm{F}(\bm{x})}{\partial{\bm{x}}}|_{\bm{x}=\overline{\bm{x}}}(\bm{x}-\bar{\bm{x}})= \frac{\partial \bm{F}(\bm{x})}{\partial{\bm{x}}}|_{\bm{x}=\overline{\bm{x}}}(\bm{x}-\bar{\bm{x}})
\end{eqnarray*} 
Where, $\frac{\partial \bm{F}(\bm{x})}{\partial{\bm{x}}}|_{\bm{x}=\overline{\bm{x}}}$ denotes the Jacobian \cite{ogata1970modern} of the system \eqref{ode} evaluated at the fixed point $\bm{x}=\overline{\bm{x}}$ and $B_{ij}=C_{ji}A_j$. The second equation follows because $\overline{\bm{x}}$ is assumed to be a fixed point (and consequently $\bm{F}(\overline{\bm{x}})=\bm{0})$. \\
Next we proceed to explicitly compute the Jacobian of the system \eqref{ode} at a given point $\bm{x}$. Note that the $i$\textsuperscript{th} row of the system equation is given by 
\begin{eqnarray}
F_i(\bm{x})\equiv -x_i(1-x_i)(\sum_{j}B_{ij}x_j - T)
\end{eqnarray}
Thus for $i \neq j$, we have 
\begin{eqnarray} \label{off_diag}
\frac{\partial F_i}{\partial x_j}=-x_i(1-x_i)B_{ij}
\end{eqnarray}
and for $i=j$, the diagonal entry is given by
\begin{eqnarray} \label{diag}
&& \frac{\partial F_i}{\partial x_i}=-\bigg(x_i(1-x_i)B_{ii} + (1-2x_i)(\sum_{j}B_{ij}x_j - T)\bigg) \nonumber \\
&=& -\bigg(x_i(1-x_i)B_{ii} + (1-2x_i)(S_i-T)\bigg)
\end{eqnarray}
Since the node $k$ is assumed to undergo an uncontrollable overload condition, we necessarily have $x_k=0$. Hence from Eqns \eqref{off_diag} and \eqref{diag}, in the $k$\textsuperscript{th} row of the Jacobian matrix, the off-diagonal entries are all zero and the diagonal entry is $-(S_k-T)$. Hence if $S_k-T<0$, at least one eigenvalue of the Jacobian matrix at the fixed point $\overline{x}$ is strictly positive and hence the fixed point $\overline{x}$ is unstable. This implies that a sufficient condition to avoid uncontrollable overload at node $k$ is given by 
\begin{eqnarray} \label{node_k_uo}
\sum_{j\neq k}C_{jk}A_j \leq T_k
\end{eqnarray}
where we have used the fact that $x_i(t)\leq 1, \forall i$ and $x_k=0$. The derivation of the sufficient condtion for absence of uncontrollable overload condition is completed by taking intersection of hyperplanes \eqref{node_k_uo} for all $k=1,2,\ldots, N. \hspace{5pt} \blacksquare$  
\subsection{Proof of Theorem \ref{two_node_lemma}} \label{two_node_lemma_proof}
\begin{proof}
\textbf{part-(1) [non-existence of periodic orbits]}\\
We use Dulac's criterion to prove the non-existence of periodic orbits for the general two-node system. For ease of reference, we recall Dulac's criterion below :
\begin{theorem}[Dulac's criterion \cite{strogatz2014nonlinear}]
Let $\dot{\bm{x}}=\bm{f}(\bm{x})$ be a continuously differentiable vector-field defined on a simply connected subset $R$ of the plane. If there exists a continuously differentiable, real-valued function $g(\bm{x})$ such that $\nabla \cdot (g \dot{\bm{x}})$ has one sign throughout $\mathcal{D}$, then there are no closed orbits lying entirely in $\mathcal{D}$.
\end{theorem}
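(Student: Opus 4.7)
The plan is to argue by contradiction using the two-dimensional divergence theorem (Green's theorem in flux form). Suppose, towards a contradiction, that the system $\dot{\bm{x}}=\bm{f}(\bm{x})$ admits a closed orbit $\Gamma$ lying entirely in $\mathcal{D}$. Because $\bm{f}$ is continuously differentiable, uniqueness of solutions to the associated initial-value problem implies that $\Gamma$ is a simple (non-self-intersecting) closed curve in the plane. By the Jordan curve theorem, $\Gamma$ bounds a region $A$. Here is where I would explicitly invoke the hypothesis that $\mathcal{D}$ is simply connected: this is precisely what ensures the Jordan interior $A$ is contained in $\mathcal{D}$, so that the sign condition on $\nabla\cdot(g\bm{f})$ applies throughout $A$.

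Next I would apply the planar divergence theorem to the $\mathcal{C}^1$ vector field $g\bm{f}$ on $A$:
\begin{equation*}
\iint_A \nabla\cdot(g\bm{f})\, dA \;=\; \oint_\Gamma g\,\bm{f}\cdot\hat{\bm{n}}\, ds,
\end{equation*}
where $\hat{\bm{n}}$ is the outward unit normal along $\Gamma$. The key geometric observation is that $\Gamma$ is itself a trajectory of the flow, so $\bm{f}$ is everywhere tangent to $\Gamma$, whence $\bm{f}\cdot\hat{\bm{n}}\equiv 0$ on $\Gamma$ and the boundary integral vanishes identically. On the other hand, the hypothesis that $\nabla\cdot(g\bm{f})$ has one sign throughout $\mathcal{D}$ (and in particular throughout $A$, which has positive Lebesgue measure) forces the left-hand area integral to be strictly positive or strictly negative. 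The two sides of the identity therefore cannot agree, and this contradiction rules out the existence of $\Gamma$.

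The main obstacle, such as it is, is being careful about how each regularity assumption enters. Continuous differentiability of $\bm{f}$ supplies both uniqueness of trajectories (so $\Gamma$ is a simple closed curve) and the $\mathcal{C}^1$ regularity of the boundary needed to legitimize Green's theorem; continuous differentiability of $g$ ensures $g\bm{f}$ is $\mathcal{C}^1$, so $\nabla\cdot(g\bm{f})$ is a continuous and integrable function on the compact closure $\overline{A}$; simple connectedness of $\mathcal{D}$ is what prevents $A$ from ``escaping'' through a hole in $\mathcal{D}$ where the sign hypothesis might fail. A secondary subtlety is the precise reading of ``one sign'': the standard interpretation (namely $\nabla\cdot(g\bm{f})\geq 0$ everywhere on $\mathcal{D}$ and not identically zero, or the analogous non-positive version) suffices, because then continuity forces $\nabla\cdot(g\bm{f})$ to be strictly of that sign on an open subset of $A$, which is enough to make the area integral strictly nonzero. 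No additional dynamical-systems machinery is required; the criterion really is a one-line consequence of the divergence theorem combined with the tangency of orbits to themselves.
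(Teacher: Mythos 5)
Your argument is correct, but note that the paper itself offers no proof of this statement: Dulac's criterion is quoted verbatim as a classical theorem from the cited reference \cite{strogatz2014nonlinear} and used as a black box in the proof of part (1) of Theorem 4.4. What you have written is precisely the standard proof from that reference --- contradiction via Green's theorem in flux form, vanishing of the boundary flux because $\bm{f}$ is tangent to its own orbit, and strict non-vanishing of the area integral from the one-sign hypothesis --- so there is nothing to reconcile. The only point worth making explicit is that a non-trivial closed orbit cannot pass through an equilibrium (by uniqueness), so $\bm{f}\neq \bm{0}$ on $\Gamma$ and $\Gamma$ is a regular simple closed $\mathcal{C}^1$ curve, which is what licenses the application of the divergence theorem; you gesture at this via uniqueness but do not state it. Also, the statement's $R$ versus $\mathcal{D}$ mismatch is a typo in the paper, and you resolved it the right way.
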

Now we return to the proof of the result. Let $\mathcal{H}_2=[0,1]^2$ be the unit square, where by virtue of theorem \ref{existence_uniqueness_thm}, the trajectory of the two-node system lies for all time $t\geq 0$. Thus, it suffices to show that there does not exist any periodic orbit in its interior $\mathring{\mathcal{H}_2}\equiv \mathcal{D}$. It is obvious that the region $\mathcal{D}$ is simply connected. Now consider the following $g(\bm{x})$ for application of the Dulac's criterion, 
\begin{eqnarray}
g(\bm{x})=\frac{1}{x_1x_2(1-x_1)(1-x_2)}
\end{eqnarray}
It is easy to verify that $g(\bm{x})$ is continuously differentiable throughout the region $\mathcal{D}$. We next evaluate the divergence 
\begin{eqnarray}
\nabla \cdot (g\dot{\bm{x}})= -\beta\bigg(\frac{B_{11}}{x_2(1-x_2)}+\frac{B_{22}}{x_1(1-x_1)}\bigg)
\end{eqnarray} 
It is easy to verify that the term within the parenthesis is always strictly positive throughout the region $\mathcal{D}$. Hence, by Dulac's criterion, there are no closed orbits in $\mathcal{D}$. This proves the result.  \\
\textbf{part-(2) and (3) [controllability of the system]}\\
Let the arrival rates to node 1 and 2 be given by $A_1$ and $A_2$. Let $x_1$ and $x_2$ denote the operating point of the system at the steady-state. Our objective is to find sufficient conditions on the arrival rate vector $(A_1,A_2)$, under which the operating points $(x_1,x_2)$ such that either $x_1=0$ or $x_2=0$ (i.e. full offload to Layer-II) are avoided in the \emph{steady-state}. This will ensure that no uncontrollable overload situation takes place in the system. First we consider the fixed point 
\begin{eqnarray}
x_1=0, x_2=1
\end{eqnarray}
This fixed point will be stable if both the eigenvalues of the Jacobian matrix at this point be negative. From Eqns. \eqref{off_diag} and \eqref{diag} we have the following two conditions: 
\begin{eqnarray*}
S_1>T, S_2<T
\end{eqnarray*}
i.e.,
\begin{eqnarray*}
(1-\alpha)A_2 > T, \beta A_2 <T
\end{eqnarray*}
i.e., 
\begin{eqnarray} \label{st_1}
\frac{T}{1-\alpha}<A_2< \frac{T}{\beta} 
\end{eqnarray}
Similarly, analyzing the stability of the fixed points around the point $x_1=1, x_2=0$, we obtain that this fixed point will be unstable if 
\begin{eqnarray*}
S_1<T, S_2>T
\end{eqnarray*}
i.e.,
\begin{eqnarray*}
\alpha A_1 < T, (1-\beta)A_1 > T
\end{eqnarray*}
i.e.,
\begin{eqnarray} \label{st_2}
\frac{T}{1-\beta}< A_1 < \frac{T}{\alpha} 
\end{eqnarray}
Note that if $\alpha + \beta >1 $, both the above regions \eqref{st_1} and \eqref{st_2} will be empty and the fixed points $(1,0)$ and $(0,1)$ will be always unstable. Thus the operating point will not converge to these undesirable fixed points in the steady-state. \\
Now consider the (possibly feasible) fixed point $(x_1,x_2)$ such that
\begin{eqnarray}
x_1=0, S_2=T
\end{eqnarray}
The above condition translates to,
\begin{eqnarray}
x_1=0, A_2 x_2 \beta = T \nonumber \\
x_1=0, x_2=\frac{T}{A_2\beta}
\end{eqnarray}
This fixed point will be feasible provided $\frac{T}{A_2\beta}<1$. The Jacobian matrix about this fixed point is evaluated as 
\begin{eqnarray*}
\frac{\partial \bm{F}(\bm{x})}{\partial{\bm{x}}}|_{\bm{x}}=-
\begin{pmatrix}
(1-\beta)\frac{T}{\beta}-T && 0 \\
\frac{T}{A_2\beta}(1-\frac{T}{A_2\beta}) B_{21} && \frac{T}{A_2\beta}(1-\frac{T}{A_2\beta})B_{22}
\end{pmatrix}
\end{eqnarray*}
From the Jacobian matrix above, we conclude that both of its eigenvalues are negative provided the following two conditions hold

\begin{eqnarray}
A_2> \frac{T}{\beta} \hspace*{2pt}, \hspace*{2pt} \beta<\frac{1}{2}.
\end{eqnarray}
\newpage 
 Doing similar analysis around the fixed point $(S_1=T, x_2=0)$, we conclude that the above fixed point will be stable for all arrival rates $A_1> \frac{T}{\alpha}$ and $\alpha <\frac{1}{2}$. \\
Finally, to obtain efficient operating region for the system (with no uncontrollable overload situation), we take union over stability region of all undesired fixed points and take the complement of it. Hence, if $\alpha > \frac{1}{2},\beta > \frac{1}{2}$ all the undesrible fixed points are unstable and hence the uncontrollable overload situation is avoided for \emph{all} DNS-request arrival rates $\bm{A}$. This proves part (2) of the theorem. \\
 On the other hand, if either $\alpha < \frac{1}{2}$ or $\beta< \frac{1}{2}$ holds, then a sufficient condition for the stability of the system is given by 
\begin{eqnarray*}
A_1<\frac{T}{1-\alpha}, A_2<\frac{T}{1-\beta}
\end{eqnarray*}
This proves part (3) of the theorem. 
\end{proof}

\end{document}